\def\section{\@startsection{section}{1}%
\z@{.7\linespacing\@plus\linespacing}{.5\linespacing}%
{\normalfont\bfseries\centering}}
\def\@settitle{\begin{center}%
  \baselineskip14\p@\relax
    \bfseries
    \LARGE\@title
  \end{center}%
}
\def\@setauthors{%
  \begingroup
  \trivlist
  \centering\footnotesize \@topsep30\p@\relax
  \advance\@topsep by -\baselineskip
  \item\relax
  \andify\authors
  \def\\{\protect\linebreak}%
 {\Large\authors}%
  \endtrivlist
  \endgroup
}
\def\maketitle{\par
  \@topnum\z@ 
  \@setcopyright
  \thispagestyle{firstpage}
  \ifx\@empty\shortauthors \let\shortauthors\shorttitle
  \else \andify\shortauthors
  \fi
  \@maketitle@hook
  \begingroup
  \@maketitle
  \toks@\@xp{\shortauthors}\@temptokena\@xp{\shorttitle}%
  \toks4{\def\\{ \ignorespaces}}
  \edef\@tempa{%
    \@nx\markboth{\the\toks4
      \@nx{\the\toks@}}{\the\@temptokena}}%
  \@tempa
  \endgroup
  \c@footnote\z@
  \def\do##1{\let##1\relax}%
  \do\maketitle \do\@maketitle \do\title \do\@xtitle \do\@title
  \do\author \do\@xauthor \do\address \do\@xaddress
  \do\email \do\@xemail \do\curraddr \do\@xcurraddr
  \do\commby \do\@commby
  \do\dedicatory \do\@dedicatory \do\thanks \do\thankses
  \do\keywords \do\@keywords \do\subjclass \do\@subjclass
}
\newtheorem{defi}{Definition}[section]
\newtheorem{lem}[defi]{Lemma}
\newtheorem{prop}[defi]{Proposition}
\newtheorem{theo}[defi]{Theorem}
\newtheorem{cor}[defi]{Corollary}
\newtheorem{rk}[defi]{Remark}
\newcommand{\Exp}{\mathbb{E}}
\newcommand{\Var}{\mathrm{Var}}
\renewcommand{\Pr}{\mathbb{P}}
\newcommand{\R}{\mathbb{R}}
\newcommand{\N}{\mathbb{N}}
\newcommand{\dd}{\mathrm{d}}
\newcommand{\ee}{\mathrm{e}}
\renewcommand{\hat}{\widehat}
\renewcommand{\tilde}{\widetilde}
\renewcommand{\bar}{\overline}
\renewcommand{\emptyset}{\varnothing}
\newcommand{\dis}{\displaystyle}
\newcommand{\eps}{\varepsilon}
\newcommand{\Sevu}{\mathrm{s}^u}
\newcommand{\mQ}{\mathcal{Q}}
\newcommand{\mI}{\mathcal{I}}
\title[Probabilistic formulation of Miner's rule and application to structural fatigue]{Probabilistic formulation of Miner's rule\\ and application to structural fatigue}
\author[F.-B. Cartiaux, A. Ehrlacher, F. Legoll, A. Libal and J. Reygner]{Fran\c cois-Baptiste Cartiaux$^1$, Alain Ehrlacher$^2$, Frédéric Legoll$^{2,3}$,\\ Alex Libal$^{2,4}$ and Julien Reygner$^4$\\ \vskip 0cm
{\footnotesize $^1$ OSMOS Group, Puteaux, France}\\
{\footnotesize $^2$ Navier, École des Ponts, Univ Gustave Eiffel, CNRS, Marne-La-Vall\'ee, France}\\
{\footnotesize $^3$ MATHERIALS project-team, Inria, Paris, France}\\
{\footnotesize $^4$ CERMICS, École des Ponts, Marne-La-Vall\'ee, France}\\ \vskip 0cm
{\footnotesize \tt cartiaux@osmos-group.com, \{alain.ehrlacher,frederic.legoll,ales.libal,julien.reygner\}@enpc.fr}}
\date{\today}
\thanks{}
\keywords{}
\begin{document}
   
\begin{abstract}
The standard stress-based approach to fatigue is based on the use of S-N curves. They are obtained by applying cyclic loading of constant amplitude $S$ to identical and standardised specimens until they fail. The S-N curves actually depend on a reference probability $p$: for a given cycle amplitude $S$, they provide the number of cycles at which a proportion $p$ of specimens have failed. Based on the S-N curves, Miner's rule is next used to predict the number of cycles to failure of a specimen subjected to cyclic loading with variable amplitude. In this article, we present a probabilistic formulation of Miner's rule, which is based on the introduction of the notion of health of a specimen. We show the consistency of that new formulation with the standard approaches, thereby providing a precise probabilistic interpretation of these. Explicit formulas are derived in the case of the Weibull--Basquin model. We next turn to the case of a complete mechanical structure: taking into account size effects, and using the weakest link principle, we establish formulas for the survival probability of the structure. We illustrate our results by numerical simulations on a I-steel beam, for which we compute survival probabilities and density of failure point. We also show how to efficiently approximate these quantities using the Laplace method.
\end{abstract}

\maketitle

\baselineskip=16pt


\section{Introduction}

\subsection{Motivation and main results} 

This article introduces a probabilistic framework to assess the fatigue life of a mechanical structure. Two sources of randomness are taken into account: the initial state of the structure, which may contain cracks and other flaws at various scales that cannot be observed, and future loading, which is unknown by definition. The former is of mechanical nature, and the main focus of this article is on its modelling. The latter is of statistical nature, and it shall be dealt with by standard Monte Carlo methods. 

The notions and methods introduced in this article find their roots in the standard stress-based approach to fatigue. Two main objectives are pursued:
\begin{itemize}
    \item[(i)] recast the basic notions of this approach in a probabilistic setting;
    \item[(ii)] in this setting, design methods allowing to assess the fatigue life of a structure from experimental data on fatigue testing of specimen, which may be found in standards such as Eurocodes.
\end{itemize}
These two objectives are described in more details in the next sections.

\subsubsection{S-N curve and Miner's rule}

The stress-based approach to fatigue is based on the use of S-N curves, also called Wöhler curves~\cite{Woh70}, which are available in standards such as Eurocodes\footnote{See e.g. Figure 7.1 in the standard "EN 1993-1-9:2005, Eurocode 3: Design of steel structures - Part 1-9: Fatigue".}. They are obtained by applying constant amplitude cyclic loading to identical and standardised specimens until they fail. The amplitude of each cycle is summarised by a scalar positive quantity $S$, which is expressed in $\mathrm{MPa}$, and called the \emph{severity} of the cycle. The S-N curve then represents the number of cycles to failure (NCF), denoted by $N$, as a function of $S$. In fact, it is commonly observed in such tests that, for a given value of $S$, and even with initially identical specimens, there is a significant statistical dispersion in the NCF of the sample. This is due to the presence of randomly distributed flaws in each specimen, to which we shall refer as the \emph{randomness on the initial state} of the specimen. Therefore, the S-N curve actually depends on a (sometimes implicit) reference probability $p \in (0,1)$ and represents the \emph{quantile of  order $p$}, which we shall denote by $N_p(S)$, of the distribution of the NCF in the sample. In other words, $N_p(S)$ is the number of cycles at which a proportion $p$ of specimens subjected to cyclic loading with severity $S$ have failed. An illustrative example, obtained by numerical simulation in an idealised setting, is provided in Figure~\ref{fig:SN-simu}, and similar pictures can be found in the literature, for instance in~\cite[Figures~1.1 and~1.2]{CasFer09}. The family of S-N curves for $p$ varying in $(0,1)$ is sometimes referred to as the \emph{S-N field}~\cite{Bar19}.

\begin{figure}[ht]
    \centering
    \includegraphics[width=.8\textwidth]{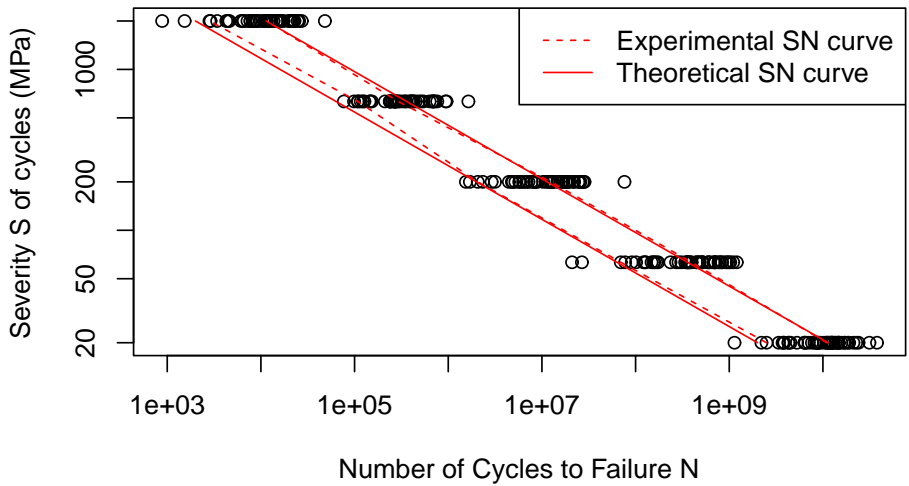}
    \caption{A numerical simulation of a fatigue testing experiment, in the idealised setting where $N$ follows a Weibull--Basquin distribution (see~\eqref{eq:Basquin:3}; we have chosen the parameters $m=1.5$, $\alpha=3$ and $\kappa$ given by~\eqref{eq:def_kappa} with $p = 0.05$, $\mathsf{N}_p = 2 \cdot 10^6$ and $\mathsf{S}_p = 200~\mathrm{MPa}$; see also Section~\ref{ss:poutre-ref}). For five levels of severity $S$, a sample of $50$ specimens are tested and their NCF are represented with points on the figure. The experimental S-N curves are plotted in dotted red lines, for the values $p=0.05$ and $p=0.5$ of the reference probability, and the theoretical S-N curve (which interpolates the theoretical quantiles of the associated Weibull distribution) is superposed in solid red line.} 
    \label{fig:SN-simu}
\end{figure}

Based on S-N curves, to predict the NCF of a specimen which is subjected to cyclic loading with variable (and possibly random) severity, a simple and widespread method is the \emph{Miner rule}, also called \emph{Palmgren--Miner} rule~\cite{Pal24,Min45}. This approach postulates the existence of a quantity $D_p$, called the \emph{cumulative damage}, which is initially equal to $0$ and which increases by $1/N_p(S)$ at each cycle of severity $S$, with $N_p(S)$ given by the S-N curve. The theoretical failure is reached when $D_p=1$, which yields a theoretical NCF $N_p(\mathbf{S})$ which depends on the whole sequence $\mathbf{S}=(S_i)_{i \geq 1}$ of severities undergone, as well as on the reference probability $p$ used to plot the S-N curve. 

\subsubsection{Probabilistic formulation of Miner's rule}

This theoretical NCF lacks a precise probabilistic interpretation, in terms of the randomness on the initial state of the specimen. The first main contribution of this work, summarised in Theorem~\ref{theo:main}, consists in providing such an interpretation: the true NCF is modelled as a random variable, whose quantile of order $p$, given $\mathbf{S}$, is shown to coincide with $N_p(\mathbf{S})$. This result, which is consistent with the case of tests with constant severity, relies on the introduction of an abstract notion of \emph{health} of a specimen, which is a random variable describing the state of the specimen after each cycle and whose existence and properties follow from first principles.

Theorem~\ref{theo:main} and its derivation are primarily of theoretical interest. However, as an operational byproduct, they yield an explicit formula for the probability that the specimen fails after $n$ cycles when subjected to cyclic loading with variable severity. Hence, a complete probabilistic formulation of Miner's rule is provided.

\subsubsection{Survival probability of structures}

The second main contribution of this work is the derivation of a formula for the survival probability of a {\em structure}, subjected to cyclic loading with variable in time and non-uniform in space severity. It is based on a continuum mechanics approach, in which the structure is partitioned into elementary volumes, which are sufficiently small so that the severity may be assumed to be uniform within each of these. These volumes are considered as independent specimens to which the probabilistic formulation of Miner's rule is applied. Using Weibull's \emph{weakest link principle}~\cite{Wei51}, we show that, in the limit of infinitesimally small elements, this approach yields size effects for the NCF of the structure, from which the analytic formula for the survival probability is derived in Theorem~\ref{theo:survival-struct}.

In this approach, the extension of the abstract notion of health from specimens to structures, again based on first principles, plays a central role. Indeed, as is the case for Miner's rule, it ensures the consistency of the description of the behaviour of the structure when subjected to cyclic loading with variable severity by extrapolating from experimental results obtained for specimens subjected to cyclic loading with constant severity.

\subsection{Outline of the article}

The probabilistic formalisation and reformulation of Miner's rule is described in Section~\ref{s:miner}. The notion of health of a specimen, on which the theoretical results of Section~\ref{s:miner} are based, is introduced in Section~\ref{s:health}. These notions are applied to the survival probability of structures in Section~\ref{s:continuum}, and a complete illustrative example is presented in Section~\ref{s:poutre}. Appendix~\ref{app:laplace} contains complements on the numerical approximation of an integral arising in the survival probability of structures.

\subsection{Related works} The random nature of the NCF in fatigue experiments has been widely observed and it is commonly accepted that a probabilistic formalism is necessary to handle this phenomenon. The probabilistic modelling of fatigue is therefore a well-investigated issue and many references are dedicated to the derivation of statistical models for S-N fields and associated estimation methods, see for instance~\cite{Cas06,Thi08,Fou14,Bar19,Mir20}. Our first result is complementary to these works, since we do not assume nor derive a particular model for the S-N field: taking the latter as an input, we actually provide a probabilistic interpretation of Miner's damage and NCF under minimal assumptions.

The second part of this article, which aims at assessing the fatigue life of a structure given experimental results on specimens, relies on statistical principles (independence and weakest link) which are commonly employed in probabilistic models for fatigue. In particular, our results share common features with the methodology developed by Castillo and Fern\'andez-Canteli, which is extensively summarised in the monograph~\cite{CasFer09}. As far as the stress-based approach to fatigue is concerned, the starting point of these works is the first-principle derivation of a general statistical model for S-N fields~\cite[Chapter~2]{CasFer09}, under a normalisation assumption which we shall also make, see Section~\ref{ss:proba}. Similarly to our argument in Section~\ref{s:continuum}, the weakest link principle then entails the derivation of size effects for extrapolation from specimens to structures~\cite[Chapter~3]{CasFer09}. Last, damage measures and damage accumulation are discussed from a probabilistic point of view in~\cite[Chapters~6 and~7]{CasFer09}. 

There are however several differences between these works and our approach. First, while~\cite{CasFer09} relies on extreme value theory to suggest to model the law of NCF as a Weibull or Gumbel distribution, we emphasize that this is not necessary and that size effects may be described in a nonparametric way, independently from extreme value distributions. A similar remark concerning size effects induced by the weakest link principle is made by Jeulin in~\cite[p.~780]{Jeu21}, based on a microscopic approach which we shall discuss in Section~\ref{sss:Poisson}. Second, our study of damage accumulation remains in the strict framework of Miner's rule and does not attempt to provide other damage measures. Last, and foremost, the introduction in the present article of the notion of health of a specimen or of a structure, which is the key theoretical feature of our work, is new. It brings forth analytical formulas for the survival probability of a structure under variable, nonuniform and random loading, and therefore entails to complement the usual computation of the deterministic fatigue life of the structure with the whole distribution of the random fatigue life.

\subsection{Notation and comments}\label{ss:discr-cont} 

Throughout the article, loading cycles are indexed by the set of integers $\N = \{1, 2, \ldots\}$, so that the NCF of a specimen (or a structure) is, in principle, an integer-valued random variable. However, in order to not overload the article with technical aspects, we shall often implicitly consider such variables as continuous, which is legitimate since, in practice, they assume large values. A consequence of this notational abuse is that we shall (still implicitly) consider that the distribution functions of discrete quantities are continuous and increasing, and that discrete dynamical systems are well approximated by continuous ones. We denote by the bold symbol $\mathbf{S} = (S_i)_{i \geq 1}$ a sequence of cycle severities (where $S_i$ is of course the severity of the $i$-th cycle). 

\section{Probabilistic formalisation of Miner's rule}\label{s:miner}

Section~\ref{ss:proba} introduces a standard probabilistic formalism to model fatigue testing, as well as the set of Assumptions~\eqref{ass:A1}, \eqref{ass:A2} and~\eqref{ass:A3}, which will serve as first principles. The probabilistic interpretation of Miner's rule, derived from these first principles, is presented in Section~\ref{ss:miner-proba}. A particular parametric case, the Weibull--Basquin model, is introduced in Section~\ref{ss:weibull-basquin}.

\subsection{Probabilistic setting and modelling assumptions}\label{ss:proba} 

\subsubsection{Probabilistic setting} 

From the statistical point of view, the experiment of cyclic loading of a specimen with variable severity results in the observation of a random variable
\begin{equation*}
  N : \left\{\begin{array}{rcl}
    \Omega \times \mathcal{X} & \to & \N\\
    (\omega, \mathbf{S}) & \mapsto & N(\omega, \mathbf{S})
  \end{array}\right.
\end{equation*}
which describes the number of cycles after which the specimen fails when subjected to the (deterministic) sequence of cycle severities $\mathbf{S} = (S_i)_{i \geq 1} \in (0,+\infty)^\N =: \mathcal{X}$. The extra parameter $\omega \in \Omega$ accounts for randomness associated with each realisation of the experiment; as is standard in probabilistic modelling, the space $\Omega$ is not given a precise physical signification but must rather be interpreted as an abstract space of possible outcomes, endowed with a probability measure $\Pr$ assessing the likelihood of each outcome. For any $\mathbf{S} \in \mathcal{X}$, the \emph{law} $\mathrm{P}_\mathbf{S}$ of the variable $N(\omega, \mathbf{S})$ is then the image of $\Pr$ by the mapping $\omega \mapsto N(\omega, \mathbf{S})$; it is thus a probability measure on $\N$ which depends on $\mathbf{S}$.

With this model at hand, fatigue testing with cyclic loading of constant severity $S$ corresponds to observing realisations of the random variable $N(\omega,\mathbf{S})$ when $S_i=S$ for any $i \geq 1$. We use the shorthand notation $N(\omega,S):=N(\omega,\mathbf{S})$ in this case and denote by $\mathrm{P}_S$ the law of this variable. Therefore, experimental data yield the family of \emph{marginal distributions} $\{\mathrm{P}_S, \, S>0\}$. There are several idealisations in this statement, since in practice finitely many specimens are tested with finitely many values of $S$, hence describing the whole set of laws $\{\mathrm{P}_S, \, S>0\}$ from experiments requires to handle both statistical uncertainty related with the estimation of $\mathrm{P}_S$ from a finite sample and extrapolation to untested values of $S$. Still, from a purely theoretical point of view, even a perfect knowledge of $\{\mathrm{P}_S, \, S>0\}$ does not allow to characterise the law of $N(\omega,\mathbf{S})$ for nonconstant sequences $\mathbf{S} \in \mathcal{X}$. To carry out this task, it is thus necessary to introduce modelling assumptions on the collection of random variables $\{N(\omega,\mathbf{S}), \, \mathbf{S} \in \mathcal{X}\}$.

\subsubsection{Modelling assumptions} 

At the informal level, our two main modelling assumptions on the collection of random variables $\{N(\omega,\mathbf{S}), \, \mathbf{S} \in \mathcal{X}\}$ write as follows:
\begin{enumerate}[label=(A\arabic*),ref=A\arabic*]
  \item\label{ass:A1} Randomness only originates in the initial state of the specimen.
  \item\label{ass:A2} The damage caused by a cycle is a deterministic function of the current state of the specimen and of the severity of the cycle.
\end{enumerate}
These two assumptions are complemented by a shape homogeneity assumption on the probability measures $\{\mathrm{P}_S, \, S>0\}$:
\begin{enumerate}[label=(A\arabic*),ref=A\arabic*,start=3]
  \item\label{ass:A3} The law of the random variable $N(\omega,S)$ only depends on $S$ through a multiplicative factor: there exists a \emph{scale} function $\langle N \rangle : (0,+\infty) \to (0,+\infty)$ and a \emph{shape} function $u : [0,+\infty) \to [0,1]$ which decreases from $1$ to $0$ such that, for any $h \geq 0$ and $S>0$,
  \begin{equation*}
      \Pr\left(\frac{N(\omega,S)}{\langle N \rangle(S)} > h\right) = u(h).
  \end{equation*}
\end{enumerate}
Assumption~\eqref{ass:A3} is of a less fundamental nature than Assumptions~\eqref{ass:A1} and~\eqref{ass:A2}, but it is often observed in practice. For instance, it is present in the derivation of the probabilistic model for S-N fields performed in~\cite[Chapter~2]{CasFer09} and related to the notion of \emph{normalisation} introduced in Chapters~6 and~7 of the same reference. Furthermore, in the latter reference, extreme value theory is employed to justify the fact that Weibull or Gumbel distributions may provide adequate models for the shape function $u$. We detail the case of the Weibull model in Section~\ref{ss:weibull-basquin}. We emphasize that the main results of this work are stated in the nonparametric context of Assumption~\eqref{ass:A3}, with illustration of some of them in the Weibull case.

\begin{rk}\label{rk:norma}
Note that, if the random variable $N(\omega,S)$ satisfies Assumption~\eqref{ass:A3} for some scale function $\langle N \rangle$ and some shape function $u$, then it also does for the scale function $S \mapsto c \, \langle N \rangle(S)$ and for the shape function $h \mapsto u(c \, h)$, for any $c>0$. We will further discuss this normalisation freedom in Section~\ref{sec:norma}. 
\end{rk}

\begin{rk}[On Assumption~\eqref{ass:A3} and the S-N curves]\label{rk:SN-A3}
Let $N_p(S)$ be the quantile of order $p$ of the random variable $N(\omega,S)$, which, we recall, satisfies $\Pr\left( N(\omega,S) \leq N_p(S) \right) = p$. Assumption~\eqref{ass:A3} implies that, for a given severity $S>0$, the quantiles $N_p(S)$ of $N(\omega,S)$ are multiples of each other, namely
    \begin{equation*}
        \forall p \in (0,1), \qquad N_p(S) = u^{-1}(1-p) \ \langle N \rangle(S).
    \end{equation*}
    As a consequence, in log-log coordinates, the S-N curves associated with different reference probabilities are horizontal translations of each other, since, for any $p,q \in (0,1)$, the quantity $\ln N_p(S)-\ln N_q(S)$ does not depend on $S$.
\end{rk}

\subsection{Miner's cumulative damage}\label{ss:miner-proba}

\subsubsection{Miner's theoretical NCF}

For a given sequence of severities $\mathbf{S} \in \mathcal{X}$, we use the notation $\mathbf{S}_n=(S_1, \ldots, S_n) \in (0,+\infty)^n$. For a given reference probability $p \in (0,1)$, Miner's cumulative damage is the sequence $(D_{p,n}(\mathbf{S}_n))_{n \geq 0}$ defined by
\begin{equation}\label{eq:Dpn}
    \forall n \geq 0, \qquad D_{p,n}(\mathbf{S}_n) = \sum_{i=1}^n \frac{1}{N_p(S_i)},
\end{equation}
where $S_i$ is the severity of the $i$-th cycle and $N_p(S_i)$ is given by the S-N curve (we take the convention $\dis \sum_{i=1}^0 \cdot = 0$). Miner's theoretical NCF is then defined as
\begin{equation}\label{eq:NpbS}
    N_p(\mathbf{S}) = \inf\{n \geq 1: \, D_{p,n}(\mathbf{S}_n) \geq 1\}.
\end{equation}
When all cycles have the same severity $S$, then it is clear that $N_p(\mathbf{S})=N_p(S)$ so that Miner's theoretical NCF is the quantile of order $p$ of the random variable $N(\omega,S)$. Under Assumptions~\eqref{ass:A1}, \eqref{ass:A2} and~\eqref{ass:A3}, this interpretation is extended to the case of variable severities in Section~\ref{sec:theo_random}.

\subsubsection{Random cumulative damage and main theoretical result} \label{sec:theo_random}

Proceeding by analogy with~\eqref{eq:Dpn}, we define the \emph{random cumulative damage} as the random sequence $(D_n(\omega,\mathbf{S}_n))_{n \geq 0}$ given by 
\begin{equation}\label{eq:Dn}
  D_n(\omega,\mathbf{S}_n) = \sum_{i=1}^n \frac{1}{N(\omega,S_i)}.
\end{equation}

Our main results on the probabilistic formulation of Miner's rule are summarised as follows.

\begin{theo}[Probabilistic interpretation of Miner's cumulative damage and theoretical NCF]\label{theo:main}
  Under Assumptions~\eqref{ass:A1}, \eqref{ass:A2} and~\eqref{ass:A3}, the random variable $N(\omega,\mathbf{S})$ and the random sequence $(D_n(\omega,\mathbf{S}_n))_{n \geq 0}$ satisfy the identity
  \begin{equation}\label{eq:ND}
    \forall (\omega, \mathbf{S}) \in \Omega \times \mathcal{X}, \qquad N(\omega, \mathbf{S}) = \inf\{n \geq 1: \, D_n(\omega,\mathbf{S}_n) \geq 1\}.
  \end{equation}
  Besides, for any $p \in (0,1)$ and for any $\mathbf{S} \in \mathcal{X}$,
  \begin{enumerate}[label=(\roman*),ref=\roman*]
    \item\label{it:main:D} the quantity $D_{p,n}(\mathbf{S}_n)$ defined in~\eqref{eq:Dpn} is the quantile of order $1-p$ of $D_n(\omega,\mathbf{S}_n)$;
    \item\label{it:main:N} the quantity $N_p(\mathbf{S})$ defined in~\eqref{eq:NpbS} is the quantile of order $p$ of $N(\omega, \mathbf{S})$.
  \end{enumerate}
\end{theo}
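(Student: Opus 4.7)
The plan is to encode assumptions~\eqref{ass:A1} and~\eqref{ass:A2} into a single scalar variable describing the state of the specimen after each cycle, namely the notion of \emph{health} $H_n(\omega, \mathbf{S}_n)$ developed in Section~\ref{s:health}. Concretely, I would seek a parametrisation such that~\eqref{ass:A1} reduces all randomness to that of the initial health $H_0(\omega)$, \eqref{ass:A2} takes the additive form $H_{n+1} = H_n - \Delta(S_{n+1})$ for some deterministic function $\Delta$ of the severity only, and failure occurs the first time $H_n$ reaches $0$. The normalisation freedom of Remark~\ref{rk:norma} will then allow me to fix the unit of health by setting $\Delta(S) = 1/\langle N \rangle(S)$.

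Once this representation is in hand, the rest of the proof is essentially algebra. In the constant-severity case $S_i \equiv S$, the model will yield the pathwise identity $N(\omega, S) = H_0(\omega)\langle N \rangle(S)$ (in the continuous convention of Section~\ref{ss:discr-cont}), from which~\eqref{ass:A3} forces $\Pr(H_0(\omega) > h) = u(h)$ for every $h \geq 0$. For a general sequence $\mathbf{S}$, the dynamics of $H_n$ then give
\[
N(\omega, \mathbf{S}) = \inf\Big\{n \geq 1 : \sum_{i=1}^n \frac{1}{\langle N \rangle(S_i)} \geq H_0(\omega)\Big\},
\]
while substituting the constant-severity identity into~\eqref{eq:Dn} yields $D_n(\omega, \mathbf{S}_n) = H_0(\omega)^{-1} \sum_{i=1}^n 1/\langle N \rangle(S_i)$. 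The event $\{D_n(\omega, \mathbf{S}_n) \geq 1\}$ is therefore exactly the event inside the infimum, which proves~\eqref{eq:ND}.

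The two quantile claims will then follow by plugging in Remark~\ref{rk:SN-A3}, which rewrites $D_{p,n}(\mathbf{S}_n) = \frac{1}{u^{-1}(1-p)} \sum_{i=1}^n 1/\langle N \rangle(S_i)$. For~(\ref{it:main:D}) this gives
\[
\Pr\bigl(D_n(\omega, \mathbf{S}_n) \leq D_{p,n}(\mathbf{S}_n)\bigr) = \Pr\bigl(H_0(\omega) \geq u^{-1}(1-p)\bigr) = 1 - p,
\]
and for~(\ref{it:main:N}), the identity~\eqref{eq:ND} reduces $\{N(\omega, \mathbf{S}) \leq N_p(\mathbf{S})\}$ to $\{H_0(\omega) \leq u^{-1}(1-p)\}$, whose probability is $p$.

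The hard part will be the construction of the health variable itself: under~\eqref{ass:A1} and~\eqref{ass:A2} alone, one needs to exhibit a scalar coordinate on the state space in which the damage per cycle depends only on the current severity, not on the current state. This is where the whole modelling burden lies, and it is presumably addressed in Section~\ref{s:health}. Once such a coordinate is secured and normalised via~\eqref{ass:A3}, the remainder of the proof is pure bookkeeping.
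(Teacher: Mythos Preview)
Your proposal is correct and mirrors the paper's approach essentially step for step: the paper encodes~\eqref{ass:A1}--\eqref{ass:A2} as a health sequence, uses~\eqref{ass:A3} to produce an additively decreasing reparametrisation $\bar{H}_n$ with $\bar{H}_{n+1}-\bar{H}_n = -1/\langle N\rangle(S_{n+1})$ (this is Proposition~\ref{prop:barH}, the ``hard part'' you anticipate), obtains $N(\omega,S)=\langle N\rangle(S)\,\bar{H}_0(\omega)$ and $\Pr(\bar{H}_0>h)=u(h)$, and then derives~\eqref{eq:ND} and both quantile claims by exactly the algebra you sketch. The only cosmetic difference is that the paper starts from a general decrement $f(H_n,S_{n+1})$ and \emph{constructs} the additive coordinate via the transform $\varphi(h)=u^{-1}(\Pr(H_0>h))$, rather than positing additivity directly.
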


The interest of Theorem~\ref{theo:main} is primarily theoretical, since in practice, the random cumulative damage $D_n(\omega,\mathbf{S}_n)$ cannot be computed: indeed, for a given realisation $\omega$ of the initial state of the specimen, the family of random variables $N(\omega,S_i)$, $i \geq 1$, cannot be observed. However this statement allows to provide a statistical interpretation, in terms of the random NCF $N(\omega,\mathbf{S})$ which is eventually observed, of the computable quantities $D_{p,n}(\mathbf{S}_n)$ and $N_p(\mathbf{S})$.

The formalisation of Assumptions~\eqref{ass:A1} and~\eqref{ass:A2}, together with the proof of Theorem~\ref{theo:main}, are detailed in Section~\ref{s:health}. They rely on the introduction of an abstract notion of \emph{health} of a specimen, which describes the state of the specimen and deteriorates at each cycle. Since Theorem~\ref{theo:main}~\eqref{it:main:N} shows that all quantiles of the variable $N(\omega, \mathbf{S})$ can be computed from Miner's rule, the following corollary is immediate.

\begin{cor}[Characterisation of the law of $N(\omega, \mathbf{S})$ by $\{\mathrm{P}_S, \, S>0\}$]
  Under the assumptions of Theorem~\ref{theo:main}, for any $\mathbf{S} \in \mathcal{X}$, the law of the random variable $N(\omega, \mathbf{S})$ is entirely defined by the family of laws $\{\mathrm{P}_S, \, S>0\}$ obtained by fatigue testing with constant severity.
\end{cor}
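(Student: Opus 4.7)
The plan is to exploit Theorem~\ref{theo:main}(\ref{it:main:N}) together with the general fact that the law of a (real-valued) random variable is uniquely characterised by its quantile function $p \mapsto Q(p)$ for $p \in (0,1)$. In view of the convention of Section~\ref{ss:discr-cont}, which allows one to treat the relevant distribution functions as continuous and increasing, it suffices to show that every quantile of $N(\omega,\mathbf{S})$ is a functional of the marginal family $\{\mathrm{P}_S, \, S>0\}$ and of the deterministic sequence $\mathbf{S}$.

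The verification proceeds in two steps, both read off directly from the construction~\eqref{eq:Dpn}--\eqref{eq:NpbS}. First, for each fixed $S>0$ and each $p \in (0,1)$, the quantile $N_p(S)$ is by definition an intrinsic functional of the marginal law $\mathrm{P}_S$, so the S-N curve $S \mapsto N_p(S)$ is entirely determined by $\{\mathrm{P}_S, \, S>0\}$. Second, Miner's theoretical cumulative damage $D_{p,n}(\mathbf{S}_n) = \sum_{i=1}^n 1/N_p(S_i)$ and the associated theoretical NCF $N_p(\mathbf{S}) = \inf\{n \geq 1 : D_{p,n}(\mathbf{S}_n) \geq 1\}$ are obtained from the S-N curve and from $\mathbf{S}$ by purely deterministic operations. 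Hence $N_p(\mathbf{S})$ is a function of $\{\mathrm{P}_S, \, S>0\}$ and of $\mathbf{S}$.

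Combining these observations with Theorem~\ref{theo:main}(\ref{it:main:N}), which identifies $N_p(\mathbf{S})$ with the quantile of order $p$ of $N(\omega,\mathbf{S})$, and letting $p$ range over $(0,1)$, recovers the full quantile function of $N(\omega,\mathbf{S})$ from the marginal family $\{\mathrm{P}_S, \, S>0\}$, and hence its law.

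There is essentially no obstacle beyond what has already been done: the substantive content, namely the matching of the abstract quantile of $N(\omega,\mathbf{S})$ with the computable Miner NCF $N_p(\mathbf{S})$, is exactly the statement of Theorem~\ref{theo:main}(\ref{it:main:N}). The corollary merely repackages this identification as a statement about the law rather than about its individual quantiles.
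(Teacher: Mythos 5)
Your argument is correct and is exactly the one the paper intends: Theorem~\ref{theo:main}~(\ref{it:main:N}) identifies the quantile of order $p$ of $N(\omega,\mathbf{S})$ with $N_p(\mathbf{S})$, which is built deterministically from $\mathbf{S}$ and the S-N curves $N_p(S)$, themselves determined by $\{\mathrm{P}_S,\,S>0\}$. Letting $p$ range over $(0,1)$ recovers the full quantile function and hence the law, which is precisely why the paper labels the corollary as immediate.
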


The following result is the practical consequence of Theorem~\ref{theo:main}. 

\begin{cor}[Survival probability]\label{cor:survival}
  Under the assumptions of Theorem~\ref{theo:main}, for any value of the reference probability $p$, we have
  \begin{equation}\label{eq:survival}
    \forall \mathbf{S} \in \mathcal{X}, \quad \forall n \geq 0, \qquad \Pr\left(N(\omega,\mathbf{S}) > n\right) = u\left(u^{-1}(1-p) \, D_{p,n}(\mathbf{S}_n)\right),
  \end{equation}
  where the shape function $u$ has been introduced in Assumption~\eqref{ass:A3}.
\end{cor}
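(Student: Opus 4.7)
The plan is to combine the identity~\eqref{eq:ND} from Theorem~\ref{theo:main} with the shape-homogeneity relation recorded in Remark~\ref{rk:SN-A3} in order to pinpoint the law of the random cumulative damage at the value $1$. By~\eqref{eq:ND}, the event $\{N(\omega, \mathbf{S}) > n\}$ coincides with $\{D_n(\omega, \mathbf{S}_n) < 1\}$, so it is enough to compute $\Pr(D_n(\omega, \mathbf{S}_n) < 1)$.

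To this end, I would introduce the reference probability $p^* \in (0,1)$ defined implicitly by the equation $D_{p^*, n}(\mathbf{S}_n) = 1$. This is well posed because Remark~\ref{rk:SN-A3} gives $N_p(S) = u^{-1}(1-p)\langle N\rangle(S)$, so that
$$D_{p, n}(\mathbf{S}_n) = \frac{1}{u^{-1}(1-p)}\sum_{i=1}^n \frac{1}{\langle N \rangle(S_i)}$$
is a continuous, monotone function of $p$ sweeping $(0, +\infty)$ as $p$ ranges over $(0,1)$. By part~\eqref{it:main:D} of Theorem~\ref{theo:main}, the value $D_{p^*, n}(\mathbf{S}_n) = 1$ is the quantile of order $1 - p^*$ of $D_n(\omega, \mathbf{S}_n)$, and, under the continuity convention of Section~\ref{ss:discr-cont}, this yields
$$\Pr(D_n(\omega, \mathbf{S}_n) < 1) = 1 - p^*.$$

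It only remains to express $p^*$ in terms of a generic $p$. Applying the displayed formula for $D_{p, n}(\mathbf{S}_n)$ both at $p^*$ and at $p$ and eliminating the common sum $\sum_i 1/\langle N \rangle(S_i)$ gives $u^{-1}(1 - p^*) = u^{-1}(1-p)\,D_{p, n}(\mathbf{S}_n)$, whence $1 - p^* = u\bigl(u^{-1}(1-p)\,D_{p, n}(\mathbf{S}_n)\bigr)$. Combined with the previous step, this is exactly~\eqref{eq:survival}. I do not expect any substantial obstacle: the proof is essentially an algebraic rearrangement of the quantile identities provided by Theorem~\ref{theo:main} and of the multiplicative scaling built into Assumption~\eqref{ass:A3}. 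The only mild subtlety is the switch between strict and non-strict inequalities for the (formally discrete) variable $N(\omega, \mathbf{S})$, which is harmless under the continuous idealisation of Section~\ref{ss:discr-cont}.
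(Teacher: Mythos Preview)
Your argument is correct. It differs from the paper's proof in one respect worth noting: the paper does not introduce the auxiliary reference probability $p^*$ but instead goes back to the latent initial health $\bar{H}_0(\omega)$ constructed in Proposition~\ref{prop:barH}. Concretely, the paper uses the identity $\bar{H}_0(\omega)\,D_n(\omega,\mathbf{S}_n)=u^{-1}(1-p)\,D_{p,n}(\mathbf{S}_n)$ (established in the course of proving Theorem~\ref{theo:main}) to rewrite $\{D_n(\omega,\mathbf{S}_n)<1\}=\{\bar{H}_0(\omega)>u^{-1}(1-p)\,D_{p,n}(\mathbf{S}_n)\}$ and then reads off the probability from $\Pr(\bar{H}_0(\omega)>h)=u(h)$.

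Your route has the advantage of treating Theorem~\ref{theo:main} purely as a black box: you only invoke assertion~\eqref{it:main:D} and the scaling of Remark~\ref{rk:SN-A3}, never the health variable itself. The paper's route is a bit shorter because it reuses the intermediate identity~\eqref{eq:indy}, but it requires unpacking the proof of the theorem. One small point you might make explicit: the definition of $p^*$ via $D_{p^*,n}(\mathbf{S}_n)=1$ presupposes $n\geq 1$ (since $D_{p,0}(\mathbf{S}_0)=0$ for every $p$); the case $n=0$ is trivial because both sides of~\eqref{eq:survival} equal $1$.
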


As will be clarified in Section~\ref{s:health}, the function $u$ is related to statistics of the initial state of the specimen, which by Assumption~\eqref{ass:A1} is assumed to encode all randomness of the model. The formula~\eqref{eq:survival} therefore allows to relate the survival probability with Miner's cumulative damage. The proofs of Theorem~\ref{theo:main} and Corollary~\ref{cor:survival} are postponed until Section~\ref{ss:pf-main}.

\subsection{The Weibull--Basquin model}\label{ss:weibull-basquin} 

In the literature, it is often suggested to model the laws $\mathrm{P}_S$ of the variables $N(\omega,S)$ observed in fatigue testing as Weibull distributions with a common shape parameter $m>0$ (called the \emph{Weibull modulus}) and a scale parameter $\langle N \rangle(S)$ which may depend on $S$, namely
\begin{equation} \label{eq:weibull}
    \forall n \geq 0, \qquad \Pr(N(\omega,S)>n) = \exp\left(-\left(\frac{n}{\langle N \rangle(S)}\right)^m\right).
\end{equation}
It is clear that such a model satisfies Assumption~\eqref{ass:A3}, with $u(h)=\exp(-h^m)$. In this case, the S-N curve for the reference probability $p$ has equation
\begin{equation} \label{eq:lien_Np_angleN}
    N_p(S) = \left(-\ln(1-p)\right)^{1/m} \ \langle N \rangle(S),
\end{equation}
see Remark~\ref{rk:SN-A3}, and the formula~\eqref{eq:survival} rewrites, for any reference probability $p$,
\begin{equation}\label{eq:survival-weibull}
  \forall \mathbf{S} \in \mathcal{X}, \quad \forall n \geq 0, \qquad \Pr\left(N(\omega,\mathbf{S}) > n\right) = (1-p)^{(D_{p,n}(\mathbf{S}_n))^m}.
\end{equation}
This identity shows how to deduce the survival probability of a specimen subjected to variable cyclic loading from the deterministic computation of Miner's cumulative damage associated with the reference probability $p$, and assuming in addition the knowledge of the Weibull modulus $m$.

Furthermore, parametric models may also be employed to model S-N curves (see for instance a short list in~\cite[Tables~2.2 and~2.3]{CasFer09}). The simplest one is \emph{Basquin's model}~\cite{Bas10}, which postulates the existence of some $\alpha>0$ such that, in log-log plot, the S-N curve is linear with slope $-1/\alpha$, that is to say
\begin{equation}\label{eq:Basquin}
    \forall S > 0, \qquad \ln N_p(S) = -\alpha \ln S + \mathrm{Cte}.
\end{equation}
In this case, the S-N curve is entirely described by $\alpha$ and the location of a single point $(\mathsf{N}_p,\mathsf{S}_p)$ of this curve. The value $\mathsf{S}_p$ is called the \emph{detail category} of the specimen, associated with the NCF $\mathsf{N}_p$. Then~\eqref{eq:Basquin} rewrites
\begin{equation}\label{eq:Basquin:2}
    \forall S > 0, \qquad \frac{N_p(S)}{\mathsf{N}_p} = \left(\frac{S}{\mathsf{S}_p}\right)^{-\alpha}.
\end{equation}
Thus, assuming both a Weibull model for $N(\omega,S)$ with modulus $m$, and a Basquin model for the S-N curve with slope $-\alpha$, reference probability $p$ and detail category $(\mathsf{N}_p,\mathsf{S}_p)$, one gets from~\eqref{eq:lien_Np_angleN} and~\eqref{eq:Basquin:2} the identity
\begin{equation} \label{eq:def_kappa}
    \langle N \rangle(S) = \kappa(m,\alpha,p) \, S^{-\alpha}, \qquad \kappa(m,\alpha,p) = (-\ln(1-p))^{-1/m} \, \mathsf{N}_p \, \mathsf{S}_p^\alpha,
\end{equation}
for the shape parameter. Formula~\eqref{eq:survival-weibull} can be recast in a simple expression. We infer from~\eqref{eq:def_kappa} that $\dis \ln(1-p) = - \frac{\mathsf{N}_p^m \, \mathsf{S}_p^{\alpha m}}{\kappa(m,\alpha,p)^m}$ and we compute from~\eqref{eq:Dpn} and~\eqref{eq:Basquin:2} that $\dis D_{p,n}(\mathbf{S}_n) = \sum_{i=1}^n \frac{1}{N_p(S_i)} = \frac{1}{\mathsf{N}_p \, \mathsf{S}_p^\alpha} \sum_{i=1}^n S_i^\alpha$. We thus deduce from~\eqref{eq:survival-weibull} the formula
\begin{equation}\label{eq:Basquin:3}
    \Pr\left(N(\omega,\mathbf{S}) > n\right) = \exp\left(-\frac{1}{\kappa(m,\alpha,p)^m}\left(\sum_{i=1}^n S_i^\alpha\right)^m\right)
\end{equation}
for the survival probability associated with the sequence of severities $\mathbf{S}$. This formula as well as the results of Theorem~\ref{theo:main} are illustrated on Figure~\ref{fig:WB-per}, for the same setting as in the introductory example of Figure~\ref{fig:SN-simu}.

\begin{figure}[ht]
    \centering
    \includegraphics[width=.8\textwidth]{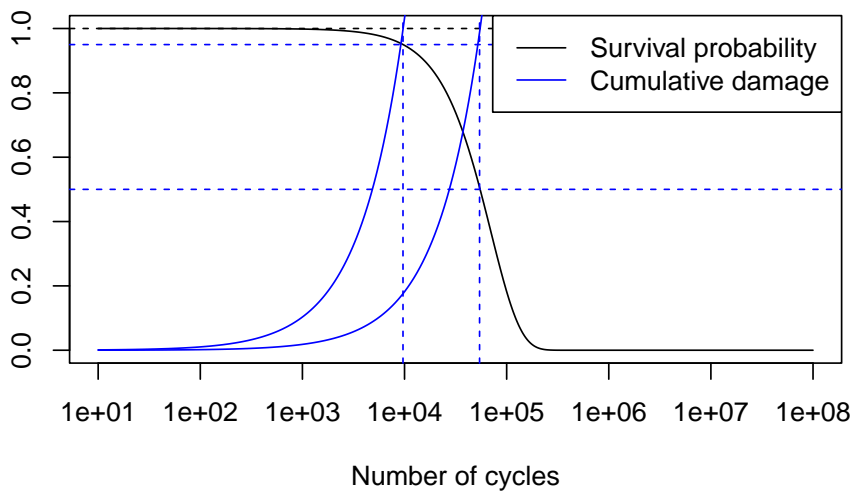}
    \caption{Simulation of a fatigue test with a non-constant sequence of severities $\mathbf{S}$, in the Weibull--Basquin model with the same parameters as on Figure~\ref{fig:SN-simu}. The solid black curve describes the survival function $\Pr(N(\omega,\mathbf{S})>n)$ as a function of the number of cycles $n$, while the two solid blue curves represent the cumulative damage $D_{p,n}(\mathbf{S}_n)$, for the two values $p=0.05$ (left-most curve) and $p=0.5$ (right-most curve) of the reference probability for which the S-N curves are plotted on Figure~\ref{fig:SN-simu}. The two dotted blue horizontal lines correspond to the probabilities 0.95 and 0.5. In accordance with the statement of Theorem~\ref{theo:main}, the NCF $N_p(\mathbf{S})$ at which the cumulative damage reaches $1$ (shown by the two dotted blue vertical lines) is observed to be the quantile of order $1-p$ of $N(\omega,\mathbf{S})$.}
    \label{fig:WB-per}
\end{figure}

\section{The notion of health of a specimen}\label{s:health}

The purpose of this section is to prove Theorem~\ref{theo:main} and Corollary~\ref{cor:survival}. To do so, Assumptions~\eqref{ass:A1} and \eqref{ass:A2} are first reformulated in more mathematical terms in Section~\ref{ss:reform}, which relies on the introduction of the notion of \emph{health} of a specimen. Based on this notion, Theorem~\ref{theo:main} and Corollary~\ref{cor:survival} are proved in Section~\ref{ss:pf-main}. 

\subsection{Formalisation of Assumptions~\eqref{ass:A1} and~\eqref{ass:A2}} \label{ss:reform} 

Our basic postulate is the existence of a random variable $H_n(\omega,\mathbf{S}_n)$, which we call the \emph{health} of the specimen after $n$ loading cycles and which describes its state. This quantity remains positive as long as the specimen does not fail, so that the sequence $(H_n(\omega,\mathbf{S}_n))_{n \geq 0}$ is related with the random NCF $N(\omega,\mathbf{S})$ of the specimen, introduced in Section~\ref{ss:proba}, through the identity
\begin{equation} \label{eq:def_N}
    \forall (\omega, \mathbf{S}) \in \Omega \times \mathcal{X}, \qquad N(\omega, \mathbf{S}) = \inf\{n \geq 1: \, H_n(\omega,\mathbf{S}_n) \leq 0\}.
\end{equation}
Then Assumptions~\eqref{ass:A1} and~\eqref{ass:A2} are reformulated as the existence of:
\begin{itemize}
    \item a positive random variable $H_0(\omega)$, which we call the \emph{initial health} of the specimen, that encodes its initial state;
    \item a deterministic, measurable function $f : \R \times (0,+\infty) \to (0,+\infty)$ which describes the decrease of health at each cycle, so that the health sequence is defined by
    \begin{equation*}
        H_0(\omega,\mathbf{S}_0) = H_0(\omega), \qquad H_{n+1}(\omega,\mathbf{S}_{n+1}) - H_n(\omega,\mathbf{S}_n) = -f(H_n(\omega,\mathbf{S}_n),S_{n+1}).
    \end{equation*}
\end{itemize}

The health sequence is not uniquely defined; indeed, it is easily checked that, for any increasing function  $\varphi : \R \to \R$ such that $\varphi(0)=0$, the sequence $(\varphi(H_n(\omega,\mathbf{S}_n)))_{n \geq 0}$ has the same properties as $(H_n(\omega,\mathbf{S}_n))_{n \geq 0}$, with a modified function $f^\varphi$. Thus, health is only defined up to an increasing transform.

The main theoretical result of this section is that the shape homogeneity Assumption~\eqref{ass:A3} entails the existence of such a transform which decreases \emph{additively}, namely a sequence $(\bar{H}_n(\omega,\mathbf{S}_n))_{n \geq 0}$ such that $\bar{H}_{n+1}(\omega,\mathbf{S}_{n+1}) - \bar{H}_n(\omega,\mathbf{S}_n)$ does not depend on $\bar{H}_n(\omega,\mathbf{S}_n)$, but only on $S_{n+1}$.

\begin{prop}[Additively decreasing health]\label{prop:barH}
    Assume that there exists a health sequence $(H_n(\omega,\mathbf{S}_n))_{n \geq 0}$ associated with some random variable $H_0(\omega)$ and function $f(h,s)$ as introduced above. Let Assumption~\eqref{ass:A3} hold, and let $\langle N \rangle(s)$ and $u(h)$ be the associated scale and shape functions. For any $h \geq 0$, set 
    \begin{equation*}
        \varphi(h) = u^{-1}(\Pr(H_0(\omega)>h)),
    \end{equation*}
    and for any $n \geq 0$ and any $(\omega,\mathbf{S}) \in \Omega \times \mathcal{X}$, define
    \begin{equation*}
        \bar{H}_n(\omega,\mathbf{S}_n) = \varphi(H_n(\omega,\mathbf{S}_n)).
    \end{equation*}
    We then have the following:
    \begin{enumerate}[label=(\roman*),ref=\roman*]
        \item\label{it:barH:1} For any $(\omega,S) \in \Omega \times (0,+\infty)$, 
        \begin{equation*}
            N(\omega,S) = \langle N \rangle(S) \ \bar{H}_0(\omega).
        \end{equation*}
        \item\label{it:barH:2} For any $(\omega,\mathbf{S}) \in \Omega \times \mathcal{X}$, the sequence $(\bar{H}_n(\omega,\mathbf{S}_n))_{n \geq 0}$ satisfies the identities
        \begin{equation} \label{eq:evol_barH}
            \bar{H}_0(\omega,\mathbf{S}_0) = \varphi(H_0(\omega)), \qquad \bar{H}_{n+1}(\omega,\mathbf{S}_{n+1})-\bar{H}_n(\omega,\mathbf{S}_n) = - \frac{1}{\langle N \rangle(S_{n+1})},
        \end{equation}
        and
        \begin{equation*}
            N(\omega,\mathbf{S}) = \inf\{n \geq 1: \, \bar{H}_n(\omega,\mathbf{S}_n) \leq 0\}.
        \end{equation*}
        \item\label{it:barH:3} The health $\bar{H}$ is related with the random cumulative damage defined in~\eqref{eq:Dn} by the identity
        \begin{equation*}
            \bar{H}_n(\omega,\mathbf{S}_n) = (1-D_n(\omega,\mathbf{S}_n)) \ \bar{H}_0(\omega).
        \end{equation*}
    \end{enumerate}
\end{prop}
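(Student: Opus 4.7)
My plan is to prove~\eqref{it:barH:1} first, then obtain~\eqref{it:barH:2} from a one-cycle consistency identity, and finally deduce~\eqref{it:barH:3} by telescoping. For~\eqref{it:barH:1}, Assumption~\eqref{ass:A2} ensures that, under constant severity $s$, the sequence $(H_k(\omega,s))_{k\ge 0}$ is a deterministic function of $H_0(\omega)$, so in the continuous-variable convention of Section~\ref{ss:discr-cont} we may write $N(\omega,s) = \Psi(H_0(\omega),s)$ for a strictly increasing, invertible function $\Psi(\cdot,s)$. Computing the survival function $\Pr(N(\omega,s)>n)$ in two ways---first via~\eqref{ass:A3} and then via the invertibility of $\Psi(\cdot,s)$---gives
\[
u\!\left(\frac{n}{\langle N\rangle(s)}\right) = \Pr\!\left(H_0(\omega) > \Psi^{-1}(n,s)\right).
\]
Applying $u^{-1}$ yields $n/\langle N\rangle(s) = \varphi(\Psi^{-1}(n,s))$, and substituting $n = \Psi(h,s)$ delivers the explicit formula $\Psi(h,s) = \langle N\rangle(s)\,\varphi(h)$. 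Taking $h = H_0(\omega)$ proves~\eqref{it:barH:1}.

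For~\eqref{it:barH:2}, the crucial step is the one-cycle consistency identity
\[
\Psi(h - f(h,s),s) = \Psi(h,s) - 1,
\]
which holds because a single cycle at severity $s$ brings the specimen from health $h$ to $h-f(h,s)$, reducing the residual NCF by one. Substituting the formula for $\Psi$ just obtained gives $\varphi(h - f(h,s)) = \varphi(h) - 1/\langle N\rangle(s)$; applied to $h = H_n(\omega,\mathbf{S}_n)$ and $s = S_{n+1}$, this is exactly the additive recursion asserted in~\eqref{eq:evol_barH}. The characterisation of $N(\omega,\mathbf{S})$ in terms of $\bar H$ then follows from~\eqref{eq:def_N}, since $\varphi$ is strictly increasing (as the composition of the two decreasing maps $h\mapsto\Pr(H_0(\omega)>h)$ and $u^{-1}$) and satisfies $\varphi(0) = u^{-1}(1) = 0$, using the positivity of $H_0(\omega)$.

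Statement~\eqref{it:barH:3} is then a direct computation: telescoping~\eqref{eq:evol_barH} gives $\bar H_n(\omega,\mathbf{S}_n) = \bar H_0(\omega) - \sum_{i=1}^n 1/\langle N\rangle(S_i)$, and invoking~\eqref{it:barH:1} to rewrite $1/\langle N\rangle(S_i) = \bar H_0(\omega)/N(\omega,S_i)$ lets us factor out $\bar H_0(\omega)$, producing the desired identity $\bar H_n = (1 - D_n)\bar H_0$ with $D_n$ defined in~\eqref{eq:Dn}.

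The main obstacle is really the one-cycle consistency identity $\Psi(h-f(h,s),s) = \Psi(h,s) - 1$: it is intuitively transparent but must be justified carefully from the deterministic, Markov-like structure encoded in Assumptions~\eqref{ass:A1} and~\eqref{ass:A2}---the fact that the health dynamics is memoryless and driven only by the current health and the current severity---together with the continuous-variable convention of Section~\ref{ss:discr-cont} that allows smooth inversion of $\Psi$ in its first argument. Once this identity is in place, the remainder of the proof is essentially bookkeeping.
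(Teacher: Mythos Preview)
Your proof is correct and follows the same overall strategy as the paper: show that the constant-severity NCF factors as $\langle N\rangle(s)\,\varphi(H_0(\omega))$ via Assumption~\eqref{ass:A3}, then use the Markov structure of the health dynamics to pass to variable severity, and finally telescope to get~\eqref{it:barH:3}. The execution differs only in packaging: where you invoke the discrete one-cycle consistency identity $\Psi(h-f(h,s),s)=\Psi(h,s)-1$, the paper instead makes your $\Psi/\langle N\rangle$ explicit as the integral $\tilde\varphi(h,S)=\frac{1}{\langle N\rangle(S)}\int_0^h \frac{dh'}{f(h',S)}$ and then uses the first-order Taylor approximation~\eqref{eq:approx-phi} together with the chain rule to obtain the additive decrement; your version is slightly cleaner in that it sidesteps the chain-rule computation, while the paper's makes the dependence of $\varphi$ on $f$ transparent.
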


\begin{rk}\label{rk:norma2}
In Remark~\ref{rk:norma}, we have pointed out some normalisation freedom in the definition of the functions $\langle N \rangle$ and $u$. This yields the fact that, if $(\bar{H}_n(\omega,\mathbf{S}_n))_{n \geq 0}$ is some health which decreases in an additive manner, then so does the health $(\tilde{H}_n(\omega,\mathbf{S}_n))_{n \geq 0}$ defined by $\tilde{H}_n(\omega,\mathbf{S}_n) = \bar{H}_n(\omega,\mathbf{S}_n)/c$ for some $c>0$.
\end{rk}

\begin{proof}
    In accordance with the discussion of Section~\ref{ss:discr-cont}, in the proof we shall use as an equality the first-order approximation
    \begin{equation}\label{eq:approx-phi}
        \phi(H_{n+1}(\omega,\mathbf{S}_{n+1})) - \phi(H_n(\omega,\mathbf{S}_n)) \simeq - \phi'(H_n(\omega,\mathbf{S}_n)) \ f(H_n(\omega,\mathbf{S}_n), S_{n+1}),
    \end{equation}
    for any smooth function $\phi$.
    
    For any $h \geq 0$ and $S>0$, let us define the quantity $\tilde{\varphi}(h,S)$ by
    \begin{equation*}
        \tilde{\varphi}(h,S) = \frac{1}{\langle N \rangle(S)}\int_0^h \frac{\dd h'}{f(h',S)}.
    \end{equation*}
    In view of~\eqref{eq:approx-phi} and of the evolution law on $H$, we have, for any $S>0$,
    \begin{equation*}
        \tilde{\varphi}(H_{n+1}(\omega,S),S) - \tilde{\varphi}(H_n(\omega,S),S) = -\frac{1}{\langle N \rangle(S)},
    \end{equation*}
    so that
    \begin{equation} \label{eq:evol_tildeH}
        \tilde{\varphi}(H_n(\omega,S),S) = \tilde{\varphi}(H_0(\omega),S) - \frac{n}{\langle N\rangle(S)}.
    \end{equation}
    We infer from the definition~\eqref{eq:def_N} of $N(\omega,S)$ and from the fact that $\tilde{\varphi}$ is increasing with respect to its first variable that
    $$
    N(\omega,S) = \inf\{n \geq 1: \, \tilde{\varphi}(H_n(\omega,S),S) \leq \tilde{\varphi}(0,S) \} = \inf\{n \geq 1: \, \tilde{\varphi}(H_n(\omega,S),S) \leq 0 \}.
    $$
    In view of~\eqref{eq:evol_tildeH}, we therefore obtain
    \begin{equation}\label{eq:pf-barH:1}
        N(\omega,S) = \langle N \rangle(S) \ \tilde{\varphi}(H_0(\omega),S).
    \end{equation}
    Since $\tilde{\varphi}(\cdot,S)$ is increasing, we deduce that, for any $h \geq 0$,
    \begin{align*}
        \Pr\left(H_0(\omega) > h\right) &= \Pr\left(\tilde{\varphi}(H_0(\omega),S) > \tilde{\varphi}(h,S)\right)\\
        &= \Pr\left(\frac{N(\omega,S)}{\langle N \rangle (S)} > \tilde{\varphi}(h,S)\right)\\
        &= u\left(\tilde{\varphi}(h,S)\right),
    \end{align*}
    where the last equality follows from Assumption~\eqref{ass:A3}. As a consequence,
    \begin{equation*}
        \tilde{\varphi}(h,S) = u^{-1}\left(\Pr\left(H_0(\omega) > h\right)\right) = \varphi(h).
    \end{equation*}
    Combining this result with the identity~\eqref{eq:pf-barH:1} proves Assertion~\eqref{it:barH:1}.
    
    To prove Assertion~\eqref{it:barH:2}, we use~\eqref{eq:approx-phi} again and the fact that $\tilde{\varphi}(h,S) = \varphi(h)$ does not depend on $S$ to write, for any $\mathbf{S} \in \mathcal{X}$,
    \begin{align*}
        \varphi(H_{n+1}(\omega,\mathbf{S}_{n+1})) - \varphi(H_n(\omega,\mathbf{S}_n)) &= -\varphi'(H_n(\omega,\mathbf{S}_n)) \ f(H_n(\omega,\mathbf{S}_n),S_{n+1})\\
        &=-\frac{\partial\tilde{\varphi}}{\partial h}(H_n(\omega,\mathbf{S}_n),S_{n+1}) \ f(H_n(\omega,\mathbf{S}_n),S_{n+1})\\
        &= -\frac{1}{\langle N \rangle(S_{n+1})}.
    \end{align*}
    Thus, Equation~\eqref{eq:evol_barH} holds with $\bar{H}_n(\omega,\mathbf{S}_n) = \varphi(H_n(\omega,\mathbf{S}_n))$. The last statement of Assertion~\eqref{it:barH:2} is a simple consequence of~\eqref{eq:def_N} and of the fact that $\varphi$ is increasing.     
    
    Assertion~\eqref{it:barH:3} then immediately follows from Assertions~\eqref{it:barH:2} and~\eqref{it:barH:1} and from the definition~\eqref{eq:Dn} of $D_n(\omega,\mathbf{S}_n)$:
    $$
    \bar{H}_n(\omega,\mathbf{S}_n) = \bar{H}_0(\omega) - \sum_{i=1}^n \frac{1}{\langle N \rangle(S_i)} = \bar{H}_0(\omega) - \sum_{i=1}^n \frac{\bar{H}_0(\omega)}{N(\omega,S_i)} = \bar{H}_0(\omega) \left( 1-D_n(\omega,\mathbf{S}_n) \right). 
    $$
    This concludes the proof of Proposition~\ref{prop:barH}.
\end{proof}

From now on, we shall refer to the sequence $(\bar{H}_n(\omega,\mathbf{S}_n))_{n \geq 0}$ as the health sequence of the specimen, and no longer use the initial sequence $(H_n(\omega,\mathbf{S}_n))_{n \geq 0}$ provided by Assumptions~\eqref{ass:A1} and~\eqref{ass:A2}.

Proposition~\ref{prop:barH}~\eqref{it:barH:3} emphasizes the link between health and damage. We point out the fact that the damage $D_n(\omega,\mathbf{S}_n)$ is initially deterministic and increases randomly, while the health $\bar{H}_n(\omega,\mathbf{S}_n)$ is initially random and decreases deterministically.

\begin{rk}\label{rk:barH0-unif}
    We deduce from Proposition~\ref{prop:barH} that, for any $h \geq 0$, we have $\Pr(\bar{H}_0(\omega)>h)=u(h)$.
\end{rk}

\subsection{Proof of Theorem~\ref{theo:main} and Corollary~\ref{cor:survival}} \label{ss:pf-main} 

Let the assumptions of Proposition~\ref{prop:barH} be in force. Then the identity~\eqref{eq:ND} in Theorem~\ref{theo:main} is an immediate consequence of Proposition~\ref{prop:barH}~(\ref{it:barH:2}-\ref{it:barH:3}). We now turn to the proof of Assertions~\eqref{it:main:D} and~\eqref{it:main:N} of Theorem~\ref{theo:main}.

For any reference probability $p \in (0,1)$, we first deduce from Remark~\ref{rk:SN-A3} and Proposition~\ref{prop:barH}~\eqref{it:barH:1} the respective identities
\begin{align*}
    &D_{p,n}(\mathbf{S}_n) = \sum_{i=1}^n \frac{1}{N_p(S_i)} = \frac{1}{u^{-1}(1-p)}\sum_{i=1}^n \frac{1}{\langle N \rangle (S_i)},\\
    &D_n(\omega,\mathbf{S}_n) = \sum_{i=1}^n \frac{1}{N(\omega,S_i)} = \frac{1}{\bar{H}_0(\omega)}\sum_{i=1}^n \frac{1}{\langle N \rangle (S_i)},
\end{align*}
from which we deduce that
\begin{equation} \label{eq:indy}
    \bar{H}_0(\omega) D_n(\omega,\mathbf{S}_n) = u^{-1}(1-p) \ D_{p,n}(\mathbf{S}_n).
\end{equation}
As a consequence, and thanks to Remark~\ref{rk:barH0-unif}, we get
\begin{equation*}
    \Pr\left(D_n(\omega,\mathbf{S}_n) \leq D_{p,n}(\mathbf{S}_n)\right) = \Pr\left(\bar{H}_0(\omega) \geq u^{-1}(1-p)\right) = 1-p,
\end{equation*}
which proves Theorem~\ref{theo:main}~\eqref{it:main:D}. We next infer from~\eqref{eq:ND} and~\eqref{eq:indy} that
\begin{equation*}
    N(\omega,\mathbf{S}) = \inf\left\{n \geq 1: \, D_{p,n}(\mathbf{S}_n) \geq \frac{\bar{H}_0(\omega)}{u^{-1}(1-p)}\right\}.
\end{equation*}
In view of~\eqref{eq:NpbS}, the fact that $N(\omega,\mathbf{S}) \leq N_p(\mathbf{S})$ is equivalent to $\dis \frac{\bar{H}_0(\omega)}{u^{-1}(1-p)} \leq 1$, and thus
\begin{equation*}
    \Pr\left(N(\omega,\mathbf{S}) \leq N_p(\mathbf{S})\right) = \Pr\left(\bar{H}_0(\omega) \leq u^{-1}(1-p)\right) = p,
\end{equation*}
which proves Theorem~\ref{theo:main}~\eqref{it:main:N} and thus concludes the proof of Theorem~\ref{theo:main}.

Finally, with the same arguments, we get
\begin{align*}
    \Pr\left(N(\omega,\mathbf{S})>n\right) &= \Pr\left(D_n(\omega,\mathbf{S}_n)< 1\right)\\
    &= \Pr\left(\bar{H}_0(\omega) > u^{-1}(1-p) \ D_{p,n}(\mathbf{S}_n)\right)\\
    &= u\left(u^{-1}(1-p) \ D_{p,n}(\mathbf{S}_n)\right),
\end{align*}
which yields~\eqref{eq:survival} and proves Corollary~\ref{cor:survival}.

\section{Survival probability of a structure}\label{s:continuum} 

In this section, we consider a mechanical structure $E$ (with arbitrary dimension $d$) globally subjected to cyclic loading, with variable amplitude both in space and time. Each loading cycle is represented by a scalar valued field $\{S_i(x), \, x \in E\}$, such that $S_i(x)>0$ is the severity of the $i$-th cycle locally undergone around the point $x$ of the structure. Following the standard approach of continuum mechanics, the structure is considered as the union of disjoint elementary volumes $F^\eps_k \subset E$, $1 \leq k \leq K^\eps$, with maximal size at most equal to $\eps$ (this notion is formalized in Section~\ref{sec:formalisation} below). Each elementary volume $F^\eps_k$ is assumed to behave as a test specimen, subjected to the sequence of severities $\mathbf{S}^{F^\eps_k} = \left( S_i^{F^\eps_k} \right)_{i \geq 1} \in \mathcal{X}$, where $S_i^{F^\eps_k}$ is the value of $S_i(x)$ on $F^\eps_k$, which is assumed to be approximately uniform. For $\omega \in \Omega$, each element $F^\eps_k$ is associated with its random NCF $N^{F^\eps_k}(\omega,\mathbf{S}^{F^\eps_k})$, and the random NCF of the structure is then defined by
\begin{equation}\label{eq:NCF-struct}
    N(\omega,\{\mathbf{S}(x), \, x \in E\}) := \lim_{\eps \to 0} \, \min_{1 \leq k \leq K^\eps} N^{F^\eps_k}\left(\omega,\mathbf{S}^{F^\eps_k}\right).
\end{equation}
Taking the minimum over $1 \leq k \leq K^\eps$ amounts to assuming that the structure fails as soon as one of the elementary volumes fails. We additionally consider the limit $\eps \to 0$, i.e. the limit where elementary volumes become infinitesimally small.

The aim of this section is to derive integral formulas for the survival probability of the structure. In Section~\ref{ss:volume}, we formulate modelling assumptions on the NCFs $\left\{ N^{F^\eps_k}(\omega,S), \, 1 \leq k \leq K^\eps \right\}$, in the continuation of Section~\ref{s:health}, from which we define the notion of initial health of the structure and derive size effects. These theoretical (but fundamental) results are applied in Section~\ref{ss:struct-proba-cond} to obtain an integral formula for the survival probability of the structure, for a given sequence $\{\mathbf{S}(x), \, x \in E\}$ of loadings. In Section~\ref{ss:struct-proba-rand}, we briefly discuss how to take into account the possible randomness of the latter sequence. The application of these formulas in the Weibull--Basquin model is presented in Section~\ref{ss:WB-struct}. 

\subsection{Size effects}\label{ss:volume}

\subsubsection{Formalisation and modelling assumptions on the random NCFs} \label{sec:formalisation}

A natural mathematical framework to derive integral formulas is to endow the domain $E$ occupied by the mechanical structure with a $\sigma$-algebra $\mathcal{E}$ and a finite measure $\lambda$, such that, for any $\eps>0$, $\mathcal{E}$ contains a partition of $E$ into elementary volumes $(F^\eps_k)_{1 \leq k \leq K^\eps}$ with maximal size smaller than $\eps$: $\dis \max_{1 \leq k \leq K^\eps} \lambda(F^\eps_k) \leq \eps$. The measure $\lambda$ may be the standard volume measure, or it may be more complicated in order to account for inhomogeneities in the material (see the discussion in Section~\ref{sss:Poisson} below).

For any $F \in \mathcal{E}$ and $(\omega,S) \in \Omega \times (0,+\infty)$, we next postulate the existence of a random variable $N^F(\omega,S)$ which describes the random NCF of the specimen $F$ if it was isolated from the remainder of the structure and subjected to cyclic loading with constant severity $S>0$. Besides Assumptions~\eqref{ass:A1}, \eqref{ass:A2} and~\eqref{ass:A3} from Section~\ref{s:miner} on each variable $N^F(\omega,S)$, we introduce the following modelling assumptions on the family of random variables $\{N^F(\omega,S), \, F \in \mathcal{E}\}$:
\begin{enumerate}[label=(A\arabic*),ref=A\arabic*,start=4]
  \item\label{ass:A4} If $F \cap G = \emptyset$, then $N^F(\omega,S)$ and $N^G(\omega,S)$ are independent and 
  $$
  N^{F \cup G}(\omega,S) = \min \left(N^F(\omega,S),N^G(\omega,S)\right).
  $$
  \item\label{ass:A5} If $\lambda(F)=\lambda(G)$, then $N^F(\omega,S)$ and $N^G(\omega,S)$ share the same law.
\end{enumerate}
Assumption~\eqref{ass:A4} is also known as the \emph{weakest link} assumption in the literature~\cite{Wei51,CasFer09}. Assumption~\eqref{ass:A5} is a statistical homogeneity assumption: it asserts that the law of the NCF of an elementary volume only depends on its measure under $\lambda$. Assumptions~\eqref{ass:A4} and~\eqref{ass:A5} complement Assumptions~\eqref{ass:A1},  \eqref{ass:A2} and~\eqref{ass:A3} as first principles for the overall computation of the survival probability of the structure. They imply size effects for the elementary random NCFs $\{N^F(\omega,S), \, F \in \mathcal{E}\}$.

\begin{prop}[Size effects for the elementary random NCFs]\label{prop:vol-NCF}
    Under Assumptions~\eqref{ass:A4} and~\eqref{ass:A5}, there exists a function $\tilde{g} : \N \times (0,+\infty) \to [0,+\infty)$, which is increasing in its first variable $n$, such that, for any $F \in \mathcal{E}$,
    \begin{equation*}
        \forall (n,S) \in \N \times (0,+\infty), \qquad \Pr\left(N^F(\omega,S)>n\right) = \exp\left(-\lambda(F) \, \tilde{g}(n,S)\right).
    \end{equation*}
\end{prop}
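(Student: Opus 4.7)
The plan is to exploit the weakest link assumption to convert products of survival probabilities into sums, and then use statistical homogeneity to reduce these sums to a functional equation in the measure $\lambda(F)$, which we then solve.

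First I would fix $n \in \N$ and $S \in (0,+\infty)$ and set
\begin{equation*}
  \psi(F,n,S) := -\ln \Pr\!\left(N^F(\omega,S) > n\right) \in [0,+\infty].
\end{equation*}
For disjoint $F, G \in \mathcal{E}$, Assumption~\eqref{ass:A4} first gives independence, so that $\Pr(\min(N^F,N^G)>n) = \Pr(N^F>n)\Pr(N^G>n)$, and then identifies this product with $\Pr(N^{F\cup G}>n)$. Taking $-\ln$ yields the additivity property
\begin{equation*}
  \psi(F \cup G, n, S) = \psi(F, n, S) + \psi(G, n, S) \qquad \text{whenever } F \cap G = \emptyset.
\end{equation*}
Assumption~\eqref{ass:A5} then asserts that $\psi(F,n,S)$ depends on $F$ only through $\lambda(F)$, so that we may write $\psi(F,n,S) = \Phi(\lambda(F), n, S)$ for some function $\Phi : [0,+\infty) \times \N \times (0,+\infty) \to [0,+\infty]$.

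Combining these two facts, for any two positive values $v_1, v_2$ that can be realised as $\lambda(F_1), \lambda(F_2)$ for disjoint $F_1, F_2 \in \mathcal{E}$ (which, thanks to the assumed partition property of $\lambda$, can be done for an arbitrary pair of positive reals with sum at most $\lambda(E)$), we get Cauchy's functional equation
\begin{equation*}
  \Phi(v_1 + v_2, n, S) = \Phi(v_1, n, S) + \Phi(v_2, n, S).
\end{equation*}
Since $\Phi(\cdot, n, S)$ is nonnegative (being a $-\ln$ of a probability bounded by $1$), the standard argument for Cauchy's equation applies: additivity forces $\Phi(kv/m, n, S) = (k/m)\Phi(v, n, S)$ for all $k, m \in \N$, and nonnegativity then yields monotonicity, which extends this linearity from rationals to all reals. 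Hence there exists $\tilde{g}(n,S) \geq 0$ such that $\Phi(v, n, S) = v\, \tilde{g}(n, S)$, which gives the desired formula
\begin{equation*}
  \Pr\!\left(N^F(\omega,S) > n\right) = \exp\!\left(-\lambda(F)\, \tilde g(n,S)\right).
\end{equation*}
Monotonicity of $\tilde g$ in $n$ follows from the trivial inclusion $\{N^F > n+1\} \subset \{N^F > n\}$, which makes $n \mapsto \Pr(N^F > n)$ nonincreasing and hence $n \mapsto \tilde{g}(n,S)$ nondecreasing.

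The only genuinely delicate point is the Cauchy equation step: one must make sure that the collection $\{\lambda(F) : F \in \mathcal{E}\}$ is rich enough and that the induced $\Phi$ is regular enough (nonnegative, or equivalently monotone) to rule out pathological non-measurable solutions. In the spirit of the paper's informal style this is essentially a one-line invocation, but it is the only step where one has to appeal to an external fact rather than the stated assumptions alone.
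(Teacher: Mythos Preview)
Your proof is correct and follows essentially the same approach as the paper's: both reduce the problem to a Cauchy-type functional equation in $\lambda(F)$ via~\eqref{ass:A4} and~\eqref{ass:A5}, then invoke the richness of $\{\lambda(F):F\in\mathcal{E}\}$ (the paper cites a non-atomicity result from measure theory) together with regularity (nonnegativity/boundedness) to force the linear/exponential form. The only cosmetic difference is that you take $-\ln$ first and work additively, while the paper keeps the multiplicative form of the equation.
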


\begin{proof}
    Fix $(n,S) \in \N \times (0,+\infty)$. For any $F \in \mathcal{E}$, Assumption~\eqref{ass:A5} asserts that the quantity $\Pr(N^F(\omega,S)>n)$ only depends on $F$ through $\lambda(F)$, it is therefore denoted by $v(n,S,\lambda(F))$. From Assumption~\eqref{ass:A4}, for any disjoint sets $F,G \in \mathcal{E}$, we have
    \begin{align*}
        v(n,S,\lambda(F)+\lambda(G)) &= \Pr\left(N^{F \cup G}(\omega,S)>n\right)\\
        &= \Pr\left(N^F(\omega,S)>n\right) \, \Pr\left(N^G(\omega,S)>n\right)\\
        &= v(n,S,\lambda(F)) \, v(n,S,\lambda(G)).
    \end{align*}
    The assumption that $\mathcal{E}$ contains partitions of $E$ with arbitrarily small maximal measure implies that $\lambda(F)$ takes a continuum of values~\cite[Corollary~1.12.10, p.~56]{Bog07}, which implies, thanks to the functional relation above, that $v(n,S,\lambda(F))$ is exponential in $\lambda(F)$: there exists $\tilde{g}(n,S)$ such that $v(n,S,\lambda(F))=\exp(-\lambda(F) \, \tilde{g}(n,S))$. Since the quantity $v$ decreases from $1$ to $0$ when $n$ varies in $\N$, we conclude that, as a function of $n$, $\tilde{g}$ increases from $0$ to $+\infty$.
\end{proof}

\subsubsection{Normalisation and size effects on the initial health of the structure}\label{sec:norma}

In view of Proposition~\ref{prop:barH}, to each $F \in \mathcal{E}$, one may associate an initial health $\bar{H}^F_0(\omega)$ and a scale function $\langle N^F \rangle(S)$ such that
\begin{equation*}
    \forall (\omega,S) \in \Omega \times (0,+\infty), \qquad N^F(\omega,S) = \langle N^F\rangle(S) \, \bar{H}^F_0(\omega).
\end{equation*}
Clearly, this identity does not uniquely define the pair $\left(\bar{H}^F_0(\omega),\langle N^F \rangle(S)\right)$, because, for any $c>0$, the pair $\left(\bar{H}^F_0(\omega)/c,c \, \langle N^F \rangle(S)\right)$ is equally valid. As pointed out in Remarks~\ref{rk:norma} and~\ref{rk:norma2}, this is reminiscent of the fact that Assumption~\eqref{ass:A3} does not define the functions $\langle N \rangle$ and $u$ in a unique way. The following result provides a consistent normalisation of the pairs $\left(\bar{H}^F_0(\omega),\langle N^F \rangle(S)\right)$ when $F$ varies in $\mathcal{E}$.

\begin{lem}[Normalisation of $\bar{H}^F_0(\omega)$ and $\langle N^F \rangle(S)$]\label{lem:norm-barH0}
    Let Assumptions~\eqref{ass:A4} and~\eqref{ass:A5} hold. There exists a collection of positive constants $\{c_F, \, F \in \mathcal{E}\}$ such that the function $\langle N^F\rangle'(S) = c_F \, \langle N^F\rangle(S)$ does not depend on $F$.
\end{lem}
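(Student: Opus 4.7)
The plan is to bring together the two descriptions now available for $\Pr(N^F(\omega,S)>n)$: the scale/shape decomposition of Assumption~\eqref{ass:A3} applied to the individual random variable $N^F(\omega,S)$, and the exponential volume law of Proposition~\ref{prop:vol-NCF}. Denoting by $u^F$ the shape function associated with $F \in \mathcal{E}$ (with $\lambda(F)>0$), this yields
\begin{equation*}
    u^F\!\left(\frac{n}{\langle N^F \rangle(S)}\right) = \exp\bigl(-\lambda(F)\,\tilde{g}(n,S)\bigr),
\end{equation*}
so that
\begin{equation*}
    \tilde{g}(n,S) = -\frac{1}{\lambda(F)}\ln u^F\!\left(\frac{n}{\langle N^F \rangle(S)}\right)
\end{equation*}
for every such $F$. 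The first step is thus a pure bookkeeping combination of the two representations of the same survival probability.

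The crux of the argument is to exploit the fact that the left-hand side above does not depend on $F$. Given two elements $F,G \in \mathcal{E}$ of positive $\lambda$-measure, I would equate the two resulting expressions and substitute $n = h\,\langle N^F \rangle(S)$ to obtain
\begin{equation*}
    u^G\!\left(h\,\frac{\langle N^F \rangle(S)}{\langle N^G \rangle(S)}\right) = u^F(h)^{\lambda(G)/\lambda(F)}
\end{equation*}
for every $h>0$ and $S>0$. Fixing any $h_0>0$, the right-hand side is constant in $S$; since $u^G$ is strictly monotone (decreasing from $1$ to $0$, and taken continuous and strictly decreasing under the idealisation of Section~\ref{ss:discr-cont}), the argument on the left must also be independent of $S$, and hence the ratio $\rho^{F,G}(S) := \langle N^F \rangle(S)/\langle N^G \rangle(S)$ is constant in $S$.

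Once this rigidity is established, the normalisation is immediate. Pick a reference element $F_0 \in \mathcal{E}$ with $\lambda(F_0)>0$ and a reference severity $S_0>0$, and set
\begin{equation*}
    c_F = \frac{\langle N^{F_0} \rangle(S_0)}{\langle N^F \rangle(S_0)}.
\end{equation*}
By the $S$-independence of $\rho^{F,F_0}$, the function $c_F\,\langle N^F \rangle(S)$ coincides with $\langle N^{F_0} \rangle(S)$ for every $S>0$ and is therefore independent of $F$. Elements of zero $\lambda$-measure, which never fail in finite time and for which $N^F$ may be taken identically infinite, are handled by a harmless convention (e.g.\ $c_F = 1$).

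The main obstacle I anticipate is the passage from the functional identity on $u^G$ to the constancy of $\rho^{F,G}(S)$: it relies on strict monotonicity (hence invertibility) of $u^G$, whereas Assumption~\eqref{ass:A3} literally states only that $u$ decreases from $1$ to $0$. This step is the one where the continuous idealisation of Section~\ref{ss:discr-cont} has to be invoked; short of that, one would need to argue more carefully using, for instance, density of the set of values taken by $h \mapsto h\,\rho^{F,G}(S)$ together with the right-continuity of $u^G$.
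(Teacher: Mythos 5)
Your proposal is correct, but it takes a different route from the paper. Where the paper constructs the normalised scale function directly --- it fixes $\tilde{g}_0>0$, sets $p_F = 1 - \exp(-\lambda(F)\,\tilde{g}_0)$, defines $\langle N^F\rangle'(S)$ to be the $p_F$-quantile $N^F_{p_F}(S)$, and then simply reads off from Proposition~\ref{prop:vol-NCF} that this quantile equals $\tilde{g}^{-1}(\tilde{g}_0, S)$, which manifestly does not depend on $F$ --- you instead prove a \emph{rigidity} statement: you equate the two available representations of $\Pr(N^F(\omega,S)>n)$, derive the functional equation $u^G\bigl(h\,\rho^{F,G}(S)\bigr) = u^F(h)^{\lambda(G)/\lambda(F)}$, and use strict monotonicity of $u^G$ to conclude that the ratio $\rho^{F,G}(S) = \langle N^F\rangle(S)/\langle N^G\rangle(S)$ is constant in $S$, after which the normalisation is obtained by fixing a reference element and a reference severity. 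Both routes use the same two ingredients (Proposition~\ref{prop:vol-NCF} and the scale/shape decomposition of Assumption~\eqref{ass:A3}), but the paper's construction is shorter because it never needs to extract the $S$-independence of the ratio as a separate step: the choice of a $\lambda(F)$-dependent reference probability absorbs the volume factor automatically, and the formula $\tilde{g}^{-1}(\tilde{g}_0, S)$ does the rest. Your version buys an explicit identification of the constants $c_F$ via a reference pair $(F_0, S_0)$ and makes visible exactly where strict monotonicity of the shape functions (hence the continuous idealisation of Section~\ref{ss:discr-cont}) is needed, a point you rightly flag; the paper hides that same dependence in the invertibility of $\tilde{g}(\cdot, S)$ and of $u^F$. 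Your handling of null sets by convention is a sensible addition that the paper leaves implicit.
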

\begin{proof}
    Fix $\tilde{g}_0 > 0$ and, for any $F \in \mathcal{E}$, set $p_F = 1 - \exp(-\lambda(F) \, \tilde{g}_0)$. Next, define $\langle N^F\rangle'(S)$ to be the quantile $N^F_{p_F}(S)$ of order $p_F$ of the variable $N^F(\omega,S)$. By Remark~\ref{rk:SN-A3}, $\langle N^F\rangle'(S)$ writes under the form
    \begin{equation*}
        \langle N^F\rangle'(S) = c_F \, \langle N^F\rangle(S), \qquad c_F = (u^F)^{-1}(1-p_F),
    \end{equation*}
    where $u^F$ is the shape function associated with the scale function $\langle N^F\rangle(S)$ by Assumption~\eqref{ass:A3}. Therefore, to complete the proof of the lemma, it suffices to show that $\langle N^F\rangle'(S)$ does not depend on $F$.
    
    By Proposition~\ref{prop:vol-NCF}, we have 
    \begin{equation*}
        \langle N^F\rangle'(S) = N^F_{p_F}(S) = \tilde{g}^{-1}\left(-\frac{\ln(1-p_F)}{\lambda(F)},S\right) = \tilde{g}^{-1}\left(\tilde{g}_0,S\right),
    \end{equation*}
    where $\tilde{g}^{-1}(\cdot,S)$ is the inverse function of $g(\cdot,S)$ and where the last equality stems from the definition of $p_F$. We see that the above right-hand side does not depend on $F$, which completes the proof.
\end{proof}

From now on, we set $\langle N \rangle(S) = c_F \, \langle N^F \rangle(S)$ and simply rewrite $\bar{H}_0^F(\omega)$ for $\bar{H}_0^F(\omega)/c_F$, so that
\begin{equation}\label{eq:renorm-N}
    \forall F \in \mathcal{E}, \quad \forall (\omega,S) \in \Omega \times (0,+\infty), \qquad N^F(\omega,S) = \langle N \rangle(S) \, \bar{H}^F_0(\omega),
\end{equation}
where the first factor is independent on $F$. The derivation of integral formulas for the survival probability of the structure relies on the following statement concerning the initial health.

\begin{theo}[Size effects of the initial health]\label{theo:vol-H0}
    With the normalisation of Lemma~\ref{lem:norm-barH0}, the random variables $\left\{\bar{H}^F_0(\omega), \, F \in \mathcal{E} \right\}$ are such that:
    \begin{enumerate}[label=(\roman*),ref=\roman*]
        \item\label{it:norm-barH0:1} if $F \cap G = \emptyset$, then $\bar{H}^F_0(\omega)$ and $\bar{H}^G_0(\omega)$ are independent and $\bar{H}^{F \cup G}_0(\omega) = \min\left(\bar{H}^F_0(\omega),\bar{H}^G_0(\omega)\right)$;
        \item\label{it:norm-barH0:2} if $\lambda(F)=\lambda(G)$, then $\bar{H}^F_0(\omega)$ and $\bar{H}^G_0(\omega)$ share the same law.
    \end{enumerate}
    Moreover, there exists a function $g : [0,+\infty) \to [0,+\infty)$ which increases from $0$ to $+\infty$ such that
    \begin{equation} \label{eq:indy2}
        \forall F \in \mathcal{E}, \quad \forall h \geq 0, \qquad \Pr\left(\bar{H}^F_0(\omega)>h\right) = \exp\left(-\lambda(F) \, g(h)\right).
    \end{equation}
\end{theo}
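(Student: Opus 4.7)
The overall plan is to transfer the statistical properties of the random NCFs $\{N^F(\omega,S), \, F \in \mathcal{E}\}$, established in Assumptions~\eqref{ass:A4} and~\eqref{ass:A5} and in Proposition~\ref{prop:vol-NCF}, to the initial healths $\{\bar{H}^F_0(\omega), \, F \in \mathcal{E}\}$, by using the identity~\eqref{eq:renorm-N}. The key observation is that, thanks to Lemma~\ref{lem:norm-barH0}, the factor $\langle N \rangle(S)$ appearing in~\eqref{eq:renorm-N} is deterministic and independent of $F$. Consequently, for any fixed $S>0$, the map $n \mapsto n/\langle N \rangle(S)$ is a measurable, positive, bijective transformation that preserves independence, minima, and distributional equalities, and that is moreover the same for every element $F \in \mathcal{E}$.

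For Assertion~\eqref{it:norm-barH0:1}, I would fix an arbitrary $S>0$ and write $\bar{H}^F_0(\omega) = N^F(\omega,S)/\langle N \rangle(S)$. If $F \cap G = \emptyset$, the independence of $N^F(\omega,S)$ and $N^G(\omega,S)$ asserted in Assumption~\eqref{ass:A4} immediately transfers to $\bar{H}^F_0(\omega)$ and $\bar{H}^G_0(\omega)$, since each of them is a deterministic function of the corresponding NCF. Dividing both sides of $N^{F \cup G}(\omega,S) = \min(N^F(\omega,S), N^G(\omega,S))$ by the positive scalar $\langle N \rangle(S)$ then yields $\bar{H}^{F \cup G}_0(\omega) = \min(\bar{H}^F_0(\omega), \bar{H}^G_0(\omega))$. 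Assertion~\eqref{it:norm-barH0:2} follows by the same dividing-by-$\langle N \rangle(S)$ principle: if $\lambda(F) = \lambda(G)$, Assumption~\eqref{ass:A5} gives that $N^F(\omega,S)$ and $N^G(\omega,S)$ share the same law, hence so do $\bar{H}^F_0(\omega)$ and $\bar{H}^G_0(\omega)$.

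For the exponential identity~\eqref{eq:indy2}, I would combine the representation~\eqref{eq:renorm-N} with Proposition~\ref{prop:vol-NCF}: for any $F \in \mathcal{E}$, any $S>0$, and any $h \geq 0$,
\begin{equation*}
\Pr\left(\bar{H}^F_0(\omega) > h\right) = \Pr\left(N^F(\omega,S) > h \, \langle N \rangle(S)\right) = \exp\left(-\lambda(F) \, \tilde{g}\left(h \, \langle N \rangle(S), S\right)\right).
\end{equation*}
Since the left-hand side does not depend on $S$, the function $g(h) := \tilde{g}(h \, \langle N \rangle(S), S)$ is well defined, independently of the particular choice of $S>0$, and satisfies~\eqref{eq:indy2}. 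The monotonicity of $g$ and its boundary behaviour (increasing from $0$ at $h=0$ to $+\infty$ as $h \to +\infty$) then follow directly from the corresponding properties of $\tilde{g}$ in its first variable established in Proposition~\ref{prop:vol-NCF}, together with the fact that $h \mapsto h \, \langle N \rangle(S)$ is a bijection of $[0,+\infty)$ onto itself. As an alternative, one could reprove~\eqref{eq:indy2} intrinsically by mimicking the proof of Proposition~\ref{prop:vol-NCF}: Assertions~\eqref{it:norm-barH0:1} and~\eqref{it:norm-barH0:2} force $w(h,\lambda(F)) := \Pr(\bar{H}^F_0 > h)$ to satisfy the multiplicative functional equation $w(h,\lambda(F)+\lambda(G)) = w(h,\lambda(F)) \, w(h,\lambda(G))$, which, combined with the continuum of admissible values of $\lambda(F)$, imposes the exponential form.

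I do not expect any serious obstacle in this argument; its whole point is precisely that the normalisation provided by Lemma~\ref{lem:norm-barH0} absorbs the only $F$-dependent ambiguity. Without this step, the pair $(\bar{H}^F_0, \langle N^F \rangle)$ would be defined only up to an $F$-dependent multiplicative constant, and no intrinsic size-effect statement of the form~\eqref{eq:indy2} could even be written down: the transformation relating $N^F$ to $\bar{H}^F_0$ would vary with $F$ and would fail to preserve the joint distributional structure on which the weakest-link argument rests.
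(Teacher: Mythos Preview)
Your proposal is correct and follows essentially the same approach as the paper: Assertions~\eqref{it:norm-barH0:1} and~\eqref{it:norm-barH0:2} are deduced from~\eqref{eq:renorm-N} together with Assumptions~\eqref{ass:A4} and~\eqref{ass:A5}, and the exponential identity~\eqref{eq:indy2} is obtained by combining~\eqref{eq:renorm-N} with Proposition~\ref{prop:vol-NCF} and noting that $\Pr(\bar{H}^F_0(\omega)>h)$ does not depend on $S$, so that $g(h) := \tilde{g}(h\,\langle N\rangle(S),S)$ is well defined. Your additional remarks on the monotonicity of $g$ and the alternative functional-equation derivation are correct but go slightly beyond what the paper spells out.
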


\begin{proof}
    Assertions~(i) and~(ii) follow from~\eqref{eq:renorm-N} and Assumptions~\eqref{ass:A4} and~\eqref{ass:A5}. We now prove~\eqref{eq:indy2}. Using~\eqref{eq:renorm-N} and Proposition~\ref{prop:vol-NCF}, we have, for any $h \geq 0$ and $S>0$,
    \begin{equation*}
        \Pr\left(\bar{H}^F_0(\omega)>h\right) = \Pr\left(N^F(\omega,S)>h \, \langle N \rangle(S)\right) = \exp\big(-\lambda(F) \, \tilde{g}(h \, \langle N \rangle(S),S)\big).
    \end{equation*}
    We observe that the left-hand side of the above equality does not depend on $S$. The right-hand side hence does not depend on $S$, which shows~\eqref{eq:indy2} with the function $g(h)=\tilde{g}(h \, \langle N \rangle(S),S)$, which depends neither on $F$ nor on $S$.
\end{proof}

\begin{rk}\label{rk:N-g-identif}
    The functions $\langle N \rangle(S)$ and $g(h)$ appearing in~\eqref{eq:indy2} are related with the NCF $N^F(\omega,S)$ by the identity
    \begin{equation} \label{eq:de_la_rk:N-g-identif}
        \forall n \geq 0, \qquad \Pr\left(N^F(\omega,S)>n\right) = \Pr\left(\bar{H}^F_0(\omega) > \frac{n}{\langle N\rangle(S)} \right) = \exp\left(-\lambda(F) \, g\left(\frac{n}{\langle N\rangle(S)}\right)\right),
    \end{equation}
    where the first equality stems from~\eqref{eq:renorm-N}.

    Therefore, if experimental results are available and yield the family of probability distributions $\{\mathrm{P}^F_S, \, S>0\}$ associated with a test specimen $F$ with similar mechanical properties as the structure, the functions $\langle N \rangle(S)$ and $g(h)$ can be identified, up to a linear transform of the form $(\langle N \rangle(S),g(h)) \to (c \, \langle N \rangle(S),g(c \, h))$ for some $c>0$.
\end{rk}

\subsubsection{A consistent microscopic model}\label{sss:Poisson}

Independently from the first-principle derivation performed above, the existence of a family of positive random variables $\dis \left\{ \bar{H}^F_0(\omega), \, F \in \mathcal{E} \right\}$ satisfying the conclusions of Theorem~\ref{theo:vol-H0} can be proved by a direct microscopic construction. In the present section, we develop this construction, which provides a natural interpretation of the variable $\bar{H}^F_0(\omega)$ as well as possible extensions of the present work, for example to take into account different types of flaws. However, we emphasize that the contents of the next sections, dedicated to the survival probability of structures, only depend on the statement of Theorem~\ref{theo:vol-H0} and not on this microscopic model.

Fix a function $g : [0,+\infty) \to [0,+\infty)$ increasing from $0$ to $+\infty$, and consider a realisation 
\begin{equation*}
    \{(X_\ell(\omega), \mathrm{H}_\ell(\omega)), \, \ell \geq 1\}
\end{equation*}
of a Poisson point process on the product space $E \times (0,+\infty)$, with intensity measure $\lambda \otimes \dd g$, where $\dd g$ is the measure on $(0,+\infty)$ whose cumulative distribution function is $g$. For any $F \in \mathcal{E}$, set
\begin{equation*}
    \bar{H}_0^F(\omega) = \min_{\ell \text{ s.t. } X_\ell(\omega) \in F} \mathrm{H}_\ell(\omega).
\end{equation*}
Then it follows from standard properties of Poisson point processes that the family of random variables $\dis \left\{ \bar{H}^F_0(\omega), \, F \in \mathcal{E} \right\}$ satisfies the conclusions of Theorem~\ref{theo:vol-H0}. 

Assuming that fatigue originates in the propagation of microscopic flaws, the interpretation of this construction is that these flaws are initially located at the random points $X_\ell(\omega)$ and associated with a random initial microscopic health $\mathrm{H}_\ell(\omega)$. Consistently with the weakest-link principle, the health of a volume $F$ is the smallest microscopic health of the flaws located in $F$. With this construction, we recover in particular the fact that, for a given activation level $H>0$, the set of points $X_\ell(\omega)$ such that $\mathrm{H}_\ell(\omega) \leq H$ forms a Poisson point process on $E$, with intensity measure $g(H) \, \lambda$. This approach is reminiscent of weakest-link models of failure for brittle materials, where microscopic flaws which are activated at a certain stress level are assumed to be randomly distributed in the material according to a Poisson point process. In the case of the Weibull model, this point process has an intensity proportional to $H^m$ (see~\cite[Section~20.3.3]{Jeu21}).

This interpretation also highlights the role of the measure $\lambda$: it allows to model the statistical distribution of flaws, and may be taken to be non-uniform if flaws are known to be more concentrated in certain parts of the structure. In fact, $\lambda$ need not be related with the volume measure, and may be for instance a surface measure if flaws are only present at the surface of the structure.

A natural generalisation of this approach would then be to model flaws of different types, for instance in volume {\em and} in surface, as is addressed in~\cite{BomMaySch97}. The corresponding microscopic model would be to superpose two independent Poisson point processes $\dis \left\{(X^1_\ell(\omega), \mathrm{H}^1_\ell(\omega)), \, \ell \geq 1\right\}$ and $\dis \left\{(X^2_\ell(\omega), \mathrm{H}^2_\ell(\omega)), \, \ell \geq 1\right\}$, with respective intensity measures $\lambda^1 \otimes \dd g^1$ and $\lambda^2 \otimes \dd g^2$, where $\lambda^1$ and $\lambda^2$ are the volume and surface measures, respectively.

\subsection{Survival probability conditionally on loading}\label{ss:struct-proba-cond}

In this section, we compute the survival probability of the whole structure, in the case where it is subjected to cyclic loading represented by the sequence $\{\mathbf{S}(x), \, x \in E\}$ which is assumed to be deterministic. Fix $\eps>0$ and let $F^\eps_k \in \mathcal{E}$, $1 \leq k \leq K^\eps$, be a partition of the structure $E$ such that the size of each element $F^\eps_k$ is smaller than $\eps$. Assuming that, for any cycle $i \geq 1$, the field $S_i(x)$ is approximately uniform and equal to some value $S_i^{F^\eps_k}$ in $F^\eps_k$, the random NCF $N^{F^\eps_k}(\omega,\mathbf{S}^{F^\eps_k})$ can be defined as in Section~\ref{s:miner}. Using Proposition~\ref{prop:barH} and Theorem~\ref{theo:vol-H0}, we get, for any $n \geq 1$, 
\begin{align*}
    \Pr\left(\min_{1 \leq k \leq K^\eps} N^{F^\eps_k}\left(\omega,\mathbf{S}^{F^\eps_k}\right) > n\right) &= \prod_{k=1}^{K^\eps} \Pr\Big(N^{F^\eps_k}\left(\omega,\mathbf{S}^{F^\eps_k}\right) > n\Big)\\
    &= \prod_{k=1}^{K^\eps} \Pr\left(\bar{H}^{F^\eps_k}_0(\omega) > \sum_{i=1}^n \frac{1}{\langle N \rangle(S^{F^\eps_k}_i)}\right)\\
    &= \exp\left(-\sum_{k=1}^{K^\eps} \lambda(F^\eps_k) \, g\left(\sum_{i=1}^n \frac{1}{\langle N \rangle(S^{F^\eps_k}_i)}\right)\right).
\end{align*}
When $\eps$ becomes small, the sum in the exponential converges to an integral\footnote{This statement can be made rigorous if the field $\{S_i(x), \, x \in E\}$ is continuous, and if the elementary volumes $F^\eps_k$ have a diameter which vanishes uniformly in $k$ when $\eps \to 0$.}, and we deduce the following continuum formula for the survival probability.

\begin{theo}[Survival probability]\label{theo:survival-struct}
    Under Assumptions~\eqref{ass:A1} to~\eqref{ass:A5}, the random NCF $N(\omega,\{\mathbf{S}(x), \, x \in E\})$ of the structure, which is defined by~\eqref{eq:NCF-struct}, satisfies, for any $n \geq 0$,
    \begin{equation*}
        \Pr\left(N(\omega,\{\mathbf{S}(x), \, x \in E\})>n\right) = \exp\left(-\int_{x \in E} g\left(\sum_{i=1}^n \frac{1}{\langle N \rangle(S_i(x))}\right) \, \lambda(\dd x)\right),
    \end{equation*}
    where the scale function $S \mapsto \langle N \rangle(S)$ is defined by Lemma~\ref{lem:norm-barH0} and the function $h \mapsto g(h)$ is provided in Theorem~\ref{theo:vol-H0}.
\end{theo}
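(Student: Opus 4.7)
The plan is to carry out, in a self-contained way, the computation already sketched in the paragraph immediately preceding the statement, and then to justify carefully the passage to the continuum limit $\eps\to 0$. The structure of the argument is: factorise by independence over an $\eps$-partition, use the representation of the random NCF in terms of initial health, apply the exponential size-effect formula from Theorem~\ref{theo:vol-H0}, and recognise the resulting sum as a Riemann sum for the integral in the statement.

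More precisely, I would fix $\eps>0$ and a partition $(F^\eps_k)_{1\le k\le K^\eps}$ of $E$ with $\max_k \lambda(F^\eps_k)\le \eps$, together with approximate uniform severities $S_i^{F^\eps_k}$ on $F^\eps_k$. Since the $F^\eps_k$ are pairwise disjoint, Theorem~\ref{theo:vol-H0}~\eqref{it:norm-barH0:1} provides the independence of the variables $\bar H_0^{F^\eps_k}(\omega)$, and the identity~\eqref{eq:renorm-N} propagates this independence to the $N^{F^\eps_k}(\omega,\mathbf{S}^{F^\eps_k})$. This gives
\begin{equation*}
\Pr\!\left(\min_{1\le k\le K^\eps} N^{F^\eps_k}(\omega,\mathbf{S}^{F^\eps_k})>n\right)
=\prod_{k=1}^{K^\eps} \Pr\!\left(N^{F^\eps_k}(\omega,\mathbf{S}^{F^\eps_k})>n\right).
\end{equation*}
By Proposition~\ref{prop:barH}~(\ref{it:barH:2}--\ref{it:barH:3}), applied at the level of each elementary volume after the normalisation of Lemma~\ref{lem:norm-barH0} (so that $\langle N\rangle$ is common to all $F^\eps_k$), the event $\{N^{F^\eps_k}(\omega,\mathbf{S}^{F^\eps_k})>n\}$ coincides with $\{\bar H_0^{F^\eps_k}(\omega)>\sum_{i=1}^n 1/\langle N\rangle(S_i^{F^\eps_k})\}$. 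Plugging~\eqref{eq:indy2} from Theorem~\ref{theo:vol-H0} into each factor of the above product then yields
\begin{equation*}
\Pr\!\left(\min_{1\le k\le K^\eps} N^{F^\eps_k}(\omega,\mathbf{S}^{F^\eps_k})>n\right)
=\exp\!\left(-\sum_{k=1}^{K^\eps}\lambda(F^\eps_k)\,g\!\left(\sum_{i=1}^n \frac{1}{\langle N\rangle(S_i^{F^\eps_k})}\right)\right).
\end{equation*}

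The last step is to pass to the limit $\eps\to 0$ in this identity, which I expect to be the main obstacle. The right-hand side is the exponential of (minus) a Riemann sum associated with the partition $(F^\eps_k)$ and sample values $S_i^{F^\eps_k}$ for the integrand $x\mapsto g\bigl(\sum_{i=1}^n 1/\langle N\rangle(S_i(x))\bigr)$. Under the regularity assumptions highlighted in the footnote of the paper (continuous fields $x\mapsto S_i(x)$ and partitions whose diameters vanish uniformly with $\eps$, so that the sample values $S_i^{F^\eps_k}$ approximate $S_i(x)$ uniformly on $F^\eps_k$), and given that $g$ and $\langle N\rangle$ are continuous, the Riemann sum converges to $\int_E g(\sum_{i=1}^n 1/\langle N\rangle(S_i(x)))\,\lambda(\dd x)$. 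Combining this with the definition~\eqref{eq:NCF-struct} of the structural NCF as the limit of the elementary minima yields the announced formula. The only subtle point is to ensure that one may exchange the limit $\eps\to 0$ with the probability, which follows from the monotone behaviour of $\min_k N^{F^\eps_k}$ as the partition is refined, together with continuity of the exponential and of the integral with respect to uniform convergence of the integrand.
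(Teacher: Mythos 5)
Your proposal reproduces the paper's proof essentially verbatim: factorisation by independence over an $\eps$-partition (via Theorem~\ref{theo:vol-H0}~\eqref{it:norm-barH0:1} and Proposition~\ref{prop:barH}), translation of $\{N^{F^\eps_k}>n\}$ into a condition on $\bar{H}_0^{F^\eps_k}$, insertion of~\eqref{eq:indy2}, and passage to the limit $\eps\to 0$ by recognising a Riemann sum. Your remarks on the continuum limit merely make explicit the regularity hypotheses the paper itself relegates to a footnote, so the two arguments coincide.
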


In view of Remark~\ref{rk:N-g-identif}, the quantity $\dis g\left(\sum_{i=1}^n 1/\langle N \rangle(S_i(x))\right)$, which does not change if the pair $(\langle N \rangle(S),g(h))$ is replaced by $(c \, \langle N \rangle(S),g(c \, h))$ for some $c>0$, is identifiable if experimental results are available for a specimen with similar mechanical properties as the structure. Thus, from the practical point of view, Theorem~\ref{theo:survival-struct} shows how to use experimental results for specimens subjected to cyclic loading with constant severity in order to compute the survival probability of a structure subjected to cyclic loading with variable in time and non-uniform in space severity.

\subsection{Survival probability with random loading}\label{ss:struct-proba-rand} 

In this section, we introduce a formalism to model the cases when the loading sequence $\{\mathbf{S}(x), \, x \in E\}$ is random. We assume that this sequence is independent from the initial state (and in particular the initial health) of the structure. To proceed, we assume that, for any $i \geq 1$, the scalar field $\{S_i(x), \, x \in E\}$ is a random variable in some functional space $\mathcal{F}(E)$, and denote $\Pr^*$ the probability measure on $\Omega \times \mathcal{F}(E)^\N$ which is the product of $\Pr$ on $\Omega$ and of the law of $\{\mathbf{S}(x), \, x \in E\}$ on $\mathcal{F}(E)^\N$. A direct conditioning argument shows that, in this context, Theorem~\ref{theo:survival-struct} yields the identity
\begin{equation} \label{eq:proba_random_loading}
    \Pr^*\left(N(\omega,\{\mathbf{S}(x), \, x \in E\})>n\right) = \Exp^*\left[\exp\left(-\int_{x \in E} g\left(\sum_{i=1}^n \frac{1}{\langle N \rangle(S_i(x))}\right) \, \lambda(\dd x)\right)\right]
\end{equation}
for the survival probability of the structure. This formula opens the door to analytical calculations if an explicit statistical model is specified for the law of $\{\mathbf{S}(x), \, x \in E\}$, or to Monte Carlo methods if sampling from $\{\mathbf{S}(x), \, x \in E\}$ is possible. An example of the latter case is detailed in Section~\ref{sss:WB-MC}.

\subsection{The Basquin--Weibull model}\label{ss:WB-struct} 

In this section, the survival probability for the structure given by Theorem~\ref{theo:survival-struct} is implemented in the case where experimental results $\dis \left\{\mathrm{P}^F_S, \, S>0 \right\}$ are described by the Weibull--Basquin model from Section~\ref{ss:weibull-basquin}, for some parameters $m$ and $\alpha$ that are assumed to be valid for the whole structure. 

\medskip

In this case, we first follow the identification procedure performed in Proposition~\ref{prop:vol-NCF}, Lemma~\ref{lem:norm-barH0} and Theorem~\ref{theo:vol-H0}. Consider a test specimen $F$. Assuming a Weibull model for that specimen, we have (see~\eqref{eq:weibull}) that
\begin{equation*}
    \forall n \geq 0, \qquad \Pr(N^F(\omega,S)>n) = \exp\left(-\left(\frac{n}{\langle N^F \rangle(S)}\right)^m\right),
\end{equation*}
for some scale function $\langle N^F \rangle$. In view of Proposition~\ref{prop:vol-NCF}, we can write that
$$
\lambda(F) \, \tilde{g}(n,S) = \left(\frac{n}{\langle N^F \rangle(S)}\right)^m,
$$
where we recall that the function $\tilde{g}$ is independent of $F$. As in the proof of Lemma~\ref{lem:norm-barH0}, we next fix $\tilde{g}_0 > 0$, set $p_F = 1 - \exp(-\lambda(F) \, \tilde{g}_0)$ and introduce $c_F = u^{-1}(1-p_F)$, where $u$ is the shape function associated to the specimen $F$, which reads $u(h)=\exp(-h^m)$. We thus compute that $c_F^m = \lambda(F) \, \tilde{g}_0$, and therefore obtain that
$$
\tilde{g}(n,S) = \tilde{g}_0 \, \left(\frac{n}{c_F \, \langle N^F \rangle(S)}\right)^m = \tilde{g}_0 \, \left(\frac{n}{\langle N \rangle(S)}\right)^m,
$$
where $\langle N \rangle(S)$ is independent of $F$. Following the proof of Theorem~\ref{theo:vol-H0}, we eventually obtain that
\begin{equation} \label{eq:g_pour_weibull}
g(h) = \tilde{g}(h \, \langle N \rangle(S),S) = \tilde{g}_0 \, h^m,
\end{equation}
which is indeed independent of $F$ and $S$.

\subsubsection{Transcription of Theorem~\ref{theo:survival-struct}} 

Let us denote by $\kappa_\mathrm{ref}$ the constant $\kappa(m,\alpha,p)$ introduced in Section~\ref{ss:weibull-basquin} for the reference test specimen $F$, and recall that the value of this quantity is assumed to be known from experimental test --- in fact, only $\alpha$, $m$ and the detail category associated with the S-N curve for the reference probability $p$ need to be given. We also denote by $\lambda_\mathrm{ref} = \lambda(F)$ the size of the specimen $F$. It then follows from the above computations that, for any $n$ and $S$,
\begin{equation}\label{eq:g-N-WB}
    g\left(\frac{n}{\langle N\rangle(S)}\right) = \tilde{g}_0 \, \left(\frac{n}{\langle N\rangle(S)}\right)^m = \frac{1}{\lambda_\mathrm{ref}} \left(\frac{n}{\langle N^F \rangle(S)}\right)^m = \frac{1}{\lambda_\mathrm{ref}}\left(\frac{n \, S^\alpha}{\kappa_\mathrm{ref}}\right)^m,
\end{equation}
where the last relation stems from~\eqref{eq:def_kappa}, written for the specimen $F$. We also note that~\eqref{eq:g-N-WB} can be obtained by identifying the equation~\eqref{eq:Basquin:3} (for a sequence of constant severities) with the equation~\eqref{eq:de_la_rk:N-g-identif}.

In view of Theorem~\ref{theo:survival-struct}, the survival probability for the structure therefore writes
\begin{equation}\label{eq:WB-surv:1}
    \forall n \geq 0, \qquad \Pr\left(N(\omega,\{\mathbf{S}(x), \, x \in E\})>n\right) = \exp\left(-\frac{1}{\lambda_\mathrm{ref}}\int_{x \in E} \left(\sum_{i=1}^n \frac{S_i(x)^\alpha}{\kappa_\mathrm{ref}}\right)^m \, \lambda(\dd x)\right).
\end{equation}

\subsubsection{Linear elasticity} 

If the structure is assumed to be linearly elastic, then the severity field $\{S_i(x), \, x \in E\}$ of the $i$-th cycle rewrites under the form
\begin{equation*}
    S_i(x) = P_i \, \Sevu(x),
\end{equation*}
where $\{\Sevu(x), \, x \in E\}$ describes the \emph{unitary severity} of the material and $P_i > 0$ is the global load of the $i$-th cycle. In this case, the random NCF of the structure $N(\omega,\{\mathbf{S}(x), \, x \in E\})$ only depends on $\{\mathbf{S}(x), \, x \in E\}$ through the sequence $(P_i)_{i \geq 1}$, and we denote it by $N(\omega,(P_i)_{i \geq 1})$. The formula~\eqref{eq:WB-surv:1} rewrites
\begin{equation}\label{eq:indy3}
    \forall n \geq 0, \qquad \Pr\left(N(\omega,(P_i)_{i \geq 1})>n\right) = \exp\left(-\mQ \left(\sum_{i=1}^n P_i^\alpha\right)^m\right),
\end{equation}
with a constant (with respect to the sequence $(P_i)_{i \geq 1}$) term
\begin{equation} \label{eq:indy4}
    \mQ = \frac{1}{\lambda_\mathrm{ref}} \int_{x \in E} \left(\frac{\Sevu(x)^\alpha}{\kappa_\mathrm{ref}}\right)^m \ \lambda(\dd x).
\end{equation}

The evaluation of $\mQ$ requires a mechanical computation, either analytical or by numerical methods (e.g., finite element methods), of the unitary severity field in the whole structure $E$. This may be costly or not possible. However, as soon as the exponent $\alpha m$ is large, it may be expected that only terms for which $\Sevu(x)$ is large will contribute to the overall value of the integral $\mQ$. It is therefore only necessary to know $\Sevu(x)$ in the neighbourhood of such maximal severity points, which shall be referred to as \emph{hot points}. In this perspective, a precise approximation procedure, based on the use of the Laplace method, is detailed in Appendix~\ref{app:laplace}.

We note that, once $\mQ$ is computed, evaluating the survival probability of the structure (either by~\eqref{eq:indy3} in the case of deterministic loadings, or by~\eqref{eq:indy5} below in the case of random loadings) is inexpensive.

\subsubsection{Monte Carlo method in the random loading case}\label{sss:WB-MC}

To complete the presentation of the Weibull--Basquin case, we assume that the loading is random and in the linear elasticity regime, and that the sequence of global loads $(P_i)_{i \geq 1}$ are independent and identically distributed according to some measure under which sampling is possible. Then, for any fixed value of $n$, and in view of~\eqref{eq:proba_random_loading}, the probability that the NCF of the structure be larger than $n$ can be estimated, for large $R$, by
\begin{equation} \label{eq:indy5}
    \hat{P}^*_{R,n} = \frac{1}{R} \sum_{r=1}^R \exp\left(-\mQ \left(\sum_{i=1}^n P_{i,r}^\alpha\right)^m\right),
\end{equation}
where the array $(P_{i,r})_{1 \leq i \leq n, \, 1 \leq r \leq R}$ contains independent realisations of the global load.

\subsubsection{Density probability function of the failure point} \label{sss:fail} 

In the Weibull--Basquin model, the linear elasticity assumption enables a factorisation of the scale function $\langle N \rangle(S)$ which provides, in addition to the law of the NCF, the probability density function of the point $x \in E$ at which failure occurs. Indeed, it follows from~\eqref{eq:g-N-WB} that there exists a constant $c>0$ (whose precise value is $\dis c = (\tilde{g}_0 \, \lambda_\mathrm{ref})^{1/m} \, \kappa_\mathrm{ref}$) such that, for any $S>0$,
\begin{equation} \label{eq:def_g_WB_1}
    \langle N \rangle(S) = c \, S^{-\alpha}.
\end{equation}
We then deduce from~\eqref{eq:g_pour_weibull} that, for any $h \geq 0$,
\begin{equation} \label{eq:def_g_WB_2}
    g(h) = \tilde{g}_0 \, h^m = \frac{1}{\lambda_\mathrm{ref}}\left(\frac{c \, h}{\kappa_\mathrm{ref}}\right)^m.
\end{equation}
Therefore, taking a partition $\dis \left\{F^\eps_k, \, 1 \leq k \leq K^\eps \right\}$ of the structure $E$ with maximal measure $\eps$, and denoting by $(\Sevu)^{F^\eps_k}$ the average value of the unitary severity in $F^\eps_k$, we infer from~\eqref{eq:evol_barH}, \eqref{eq:def_g_WB_1} and the linear elasticity assumption that
\begin{equation*}
    \bar{H}^{F^\eps_k}_{n+1}\left(\omega,\mathbf{S}^{F^\eps_k}_{n+1}\right) = \bar{H}^{F^\eps_k}_n\left(\omega,\mathbf{S}^{F^\eps_k}_n\right) - \frac{1}{c} \, P_{n+1}^\alpha \, \left((\Sevu)^{F^\eps_k}\right)^\alpha.
\end{equation*}
As a consequence, the quantities $\dis \bar{\mathcal{H}}^{F^\eps_k}_{n,{\rm norm}}\left(\omega,\mathbf{S}^{F^\eps_k}_n\right)$ defined by
\begin{equation*}
    \bar{\mathcal{H}}^{F^\eps_k}_{n,{\rm norm}}\left(\omega,\mathbf{S}^{F^\eps_k}_n\right) = \frac{\bar{H}^{F^\eps_k}_n(\omega,\mathbf{S}^{F^\eps_k}_n)}{((\Sevu)^{F^\eps_k})^\alpha/c}
\end{equation*}
all decrease by the same quantity $P_n^\alpha$, independently from $k$, the index of the element in the partition. Since
\begin{equation*}
    N^{F^\eps_k}\left(\omega, \mathbf{S}^{F^\eps_k}\right) = \inf\left\{n \geq 1: \, \bar{H}^{F^\eps_k}_n\left(\omega,\mathbf{S}^{F^\eps_k}_n\right) \leq 0 \right\} = \inf\left\{n \geq 1: \, \bar{\mathcal{H}}^{F^\eps_k}_{n,{\rm norm}}\left(\omega,\mathbf{S}^{F^\eps_k}_n\right) \leq 0 \right\},
\end{equation*}
we deduce that the first element to fail is the element $k$ for which $\bar{\mathcal{H}}^{F^\eps_k}_{0,{\rm norm}}(\omega)$ takes the smallest value. As a consequence, for a given index $k_0$,
\begin{align*}
    \Pr(\text{the element $k_0$ fails first}) &= \Pr\left(\forall k \not= k_0, \ \ \bar{\mathcal{H}}^{F^\eps_k}_{0,{\rm norm}}(\omega) > \bar{\mathcal{H}}^{F^\eps_{k_0}}_{0,{\rm norm}}(\omega)\right) \\
    &=\Pr\left(\forall k \not= k_0, \ \ \frac{\bar{H}^{F^\eps_k}_0(\omega)}{((\Sevu)^{F^\eps_k})^\alpha/c} > \frac{\bar{H}^{F^\eps_{k_0}}_0(\omega)}{((\Sevu)^{F^\eps_{k_0}})^\alpha/c}\right)\\
    &=\Exp\left[\prod_{k \not=k_0} \Pr\left(\frac{\bar{H}^{F^\eps_k}_0(\omega)}{((\Sevu)^{F^\eps_k})^\alpha} > \frac{\bar{H}^{F^\eps_{k_0}}_0(\omega)}{((\Sevu)^{F^\eps_{k_0}})^\alpha} \Bigg|\bar{H}^{F^\eps_{k_0}}_0(\omega)\right)\right]\\
    &=\Exp\left[\prod_{k \not=k_0} \exp\left(-\lambda(F^\eps_k) \ g\left(\frac{((\Sevu)^{F^\eps_k})^\alpha}{((\Sevu)^{F^\eps_{k_0}})^\alpha} \ \bar{H}^{F^\eps_{k_0}}_0(\omega)\right)\right)\right],
\end{align*}    
where we have used Theorem~\ref{theo:vol-H0} (at the third line for the independence of the $\left\{ \bar{H}^{F^\eps_k}_0(\omega) \right\}_k$ and at the last line with~\eqref{eq:indy2}). Using now the expression~\eqref{eq:def_g_WB_2} of $g$, we deduce that
\begin{equation} \label{eq:calcul_inter}
\Pr(\text{the element $k_0$ fails first})
=
\Exp\left[\prod_{k \not=k_0} \exp\left(-\frac{\lambda(F^\eps_k)}{\lambda_\mathrm{ref}} \left(\frac{c}{\kappa_\mathrm{ref}} \, \frac{((\Sevu)^{F^\eps_k})^\alpha}{((\Sevu)^{F^\eps_{k_0}})^\alpha} \ \bar{H}^{F^\eps_{k_0}}_0(\omega)\right)^m\right)\right].
\end{equation}
Writting~\eqref{eq:indy2} for the element $F^\eps_{k_0}$ and using the expression~\eqref{eq:def_g_WB_2} of $g$, we see that
$$
\forall h \geq 0, \qquad \Pr\left(\bar{H}^{F^\eps_{k_0}}_0(\omega)>h\right) = \exp\left(- \frac{\lambda(F^\eps_{k_0})}{\lambda_\mathrm{ref}}\left(\frac{c \, h}{\kappa_\mathrm{ref}}\right)^m\right),
$$
and we thus deduce that the random variable
\begin{equation*}
    Z_{k_0}(\omega) = \left(\frac{\lambda(F^\eps_{k_0})}{\lambda_\mathrm{ref}}\right)^{1/m} \frac{c}{\kappa_\mathrm{ref}} \ \bar{H}^{F^\eps_{k_0}}_0(\omega)
\end{equation*}
is a standard Weibull variable, namely such that $\Pr(Z_{k_0}(\omega)>z)=\exp(-z^m)$. For such variables, a direct computation shows that, for any $a>0$,
\begin{equation*}
    \Exp\left[\exp(-a \, Z_{k_0}^m(\omega))\right] = \frac{1}{1+a}.
\end{equation*}
We therefore infer from~\eqref{eq:calcul_inter} that
\begin{align*}
    \Pr(\text{the element $k_0$ fails first})
    &= \Exp\left[ \exp\left(-\sum_{k \not=k_0} \frac{\lambda(F^\eps_k)}{\lambda(F^\eps_{k_0})} \left(\frac{((\Sevu)^{F^\eps_k})^\alpha}{((\Sevu)^{F^\eps_{k_0}})^\alpha} \ Z_{k_0}(\omega)\right)^m\right)\right]\\
    &= \Exp\left[ \exp\left(-\left(\sum_{k \not= k_0} \frac{\lambda(F^\eps_k)}{\lambda(F^\eps_{k_0})} \ \frac{((\Sevu)^{F^\eps_k})^{\alpha m}}{((\Sevu)^{F^\eps_{k_0}})^{\alpha m}}\right) Z_{k_0}^m(\omega)\right)\right]\\
    &= \frac{1}{\dis 1+\sum_{k \not= k_0} \frac{\lambda(F^\eps_k)}{\lambda(F^\eps_{k_0})} \ \frac{((\Sevu)^{F^\eps_k})^{\alpha m}}{((\Sevu)^{F^\eps_{k_0}})^{\alpha m}}}\\
    &= \frac{\lambda(F^\eps_{k_0}) \ ((\Sevu)^{F^\eps_{k_0}})^{\alpha m}}{\sum_k \lambda(F^\eps_k) \ ((\Sevu)^{F^\eps_k})^{\alpha m}}.
\end{align*}
Taking the limit $\eps \to 0$, we conclude that the law of the failure point has density
\begin{equation} \label{eq:failure_density}
    p_\mathrm{fail}(x) = \frac{\Sevu(x)^{\alpha m}}{\dis \int_{x' \in E} \Sevu(x')^{\alpha m} \,  \lambda(\dd x')}
\end{equation}
with respect to the measure $\lambda(\dd x)$. This law depends on the geometry and the mechanical parameters of the structure through the map $\Sevu$, and on the local initial health through the measure $\lambda$.

\section{Application to the fatigue of an I-steel beam in the Weibull--Basquin model}\label{s:poutre}

In this section, we consider an I-steel beam (which is assumed to be part of a structure, such as a bridge) which undergoes the passage of heavy vehicles. Each passage corresponds to a loading cycle, the severity of which is defined by the maximal Von Mises stress induced by the vehicle.

We work under the Weibull--Basquin model, in the linear elastic regime, and therefore use the formulas of Section~\ref{ss:WB-struct}. In particular, the global load $P_i$ of the $i$-th cycle is the weight of the vehicle. The sequence $(P_i)_{i \geq 1}$ is assumed to be random. We thus infer from~\eqref{eq:indy3} (see also~\eqref{eq:proba_random_loading}) that
\begin{equation}\label{eq:survival-example}
    \Pr^*\left(N(\omega,(P_i)_{i \geq 1})>n\right) = \Exp^*\left[\exp\left(-\mQ \left(\sum_{i=1}^n P_i^\alpha\right)^m\right)\right],
\end{equation}
with
\begin{equation*}
    \mQ = \frac{1}{\lambda_\mathrm{ref}}\int_{x \in E} \left(\frac{\Sevu(x)^\alpha}{\kappa_\mathrm{ref}}\right)^m \lambda(\dd x),
\end{equation*}
for the survival probability of the beam.

The geometry of the beam $E$ is described in Section~\ref{ss:poutre-geo}. The unitary severity field $\Sevu(x)$ is defined and computed analytically in Section~\ref{ss:poutre-sev}. The parameters $\alpha$, $m$, $\lambda_\mathrm{ref}$ and $\kappa_\mathrm{ref}$ are specified in Section~\ref{ss:poutre-ref}. The numerical experiments results are presented in Section~\ref{ss:poutre-exp}, and the density of the failure point is briefly discussed in Section~\ref{ss:poutre-fail}.

\subsection{Geometry of the beam}\label{ss:poutre-geo} 

A representation of the beam, and the numerical values taken for the geometric parameters in the numerical application, are provided in Figure~\ref{fig:poutre}. The moment of inertia along the $z$ coordinate writes
\begin{equation*}
    I_z = \frac{b \, e^3}{12} + \frac{b \, e \, (h-e)^2}{2} + \frac{f \, (h-2e)^3}{12}.
\end{equation*}

\begin{figure}[htpb]
    \begin{minipage}{.55\textwidth}
        \includegraphics[width=\textwidth]{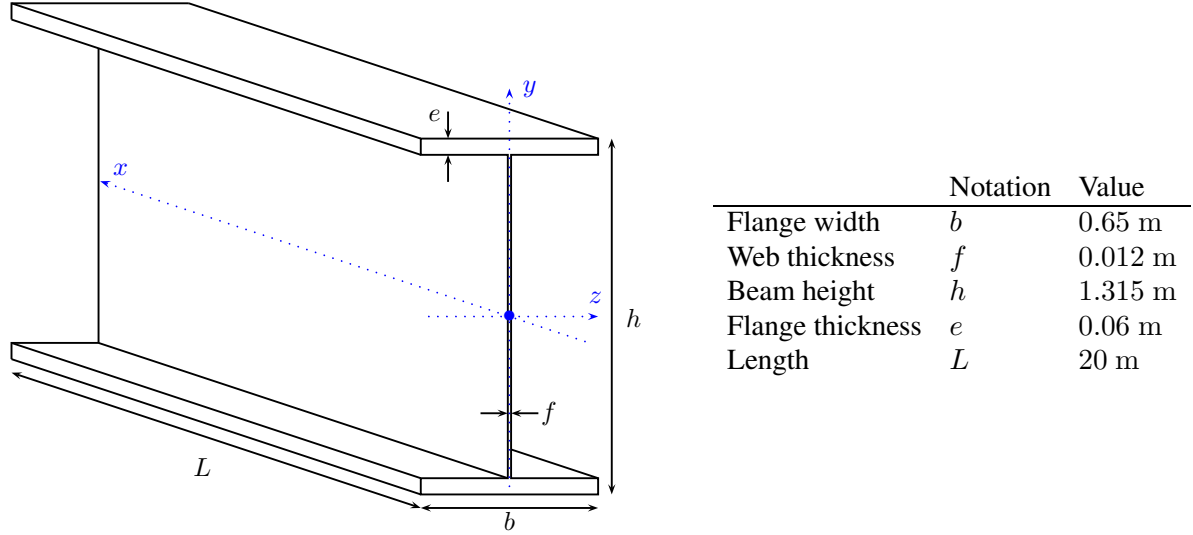}
    \end{minipage}\begin{minipage}{.45\textwidth}
        \centering
        \begin{tabular}{lll}
            & Notation & Value\\
            \hline
            Flange width & $b$ & $0.65~\mathrm{m}$\\
            Web thickness & $f$ & $0.012~\mathrm{m}$\\
            Beam height & $h$ & $1.315~\mathrm{m}$\\
            Flange thickness & $e$ & $0.06~\mathrm{m}$\\
            Length & $L$ & $20~\mathrm{m}$
        \end{tabular}
    \end{minipage}
    \caption{Schematic representation of the I-beam and numerical values of the geometric parameters.} 
    \label{fig:poutre}
\end{figure}

\subsection{Unitary severity}\label{ss:poutre-sev}

For a vehicle with weight $P$, expressed in $\mathrm{MN}$, and located in $x=a \in [0,L]$, $y=h/2$, $z=0$, the bending moment at $x$ writes
\begin{equation*}
    M(x,a) = P \, M^u(x,a), \qquad M^u(x,a) = \begin{cases}
        \frac{(L-a)x}{L}, & 0 \leq x \leq a,\\
        \frac{(L-x)a}{L}, & a < x \leq L,
    \end{cases}
\end{equation*}
and the shear force at $x$ is given by
\begin{equation*}
    V(x,a) = P \, V^u(x,a), \qquad V^u(x,a) = \begin{cases}
        -\frac{L-a}{L}, & 0 \leq x \leq a,\\
        \frac{a}{L}, & a < x \leq L.
    \end{cases}
\end{equation*}
The stress tensor at a point $(x,y,z) \in E$ then writes
\begin{equation*}
    \sigma(x,y,z;a) = P \, \sigma^u(x,y,z;a), \qquad \sigma^u(x,y,z;a) = \begin{pmatrix}
        \sigma^u_{xx} & \sigma^u_{xy} & \sigma^u_{xz}\\
        \sigma^u_{xy} & 0 & 0\\
        \sigma^u_{xz} & 0 & 0
    \end{pmatrix},
\end{equation*}
with
\begin{align*}
    \sigma^u_{xx} &= \frac{y \, M^u(x,a)}{I_z},\\
    \sigma^u_{xy} &= \frac{3V^u(x,a)}{2f} \ \frac{bh^2-b(h-2e)^2+f(h-2e)^2-4fy^2}{bh^3-b(h-2e)^3+f(h-2e)^3},\\
    \sigma^u_{xz} &= \frac{3V^u(x,a)}{2e} \ \frac{h^2-(h-2e)^2}{bh^3-b(h-2e)^3+f(h-2e)^3}\left(\frac{b}{2}-|z|\right).
\end{align*}
The Von Mises equivalent unitary stress is defined by
\begin{equation*}
    \sigma^u_\mathrm{VM} = \sqrt{(\sigma^u_{xx})^2 + 3(\sigma^u_{xy} + \sigma^u_{xz})^2},
\end{equation*}
so that the unitary severity finally writes
\begin{equation*}
    \Sevu(x,y,z) = \max_{a \in [0,L]} \sigma^u_\mathrm{VM}(x,y,z;a).
\end{equation*}
The graph of the function $x \in [0,L] \mapsto \Sevu(x,y,0)$, for several values of $y$ from $0$ to $h/2$, is shown on Figure~\ref{fig:sev-u-x}.

\begin{figure}[htpb]
    \centering
    \includegraphics[width=.8\textwidth]{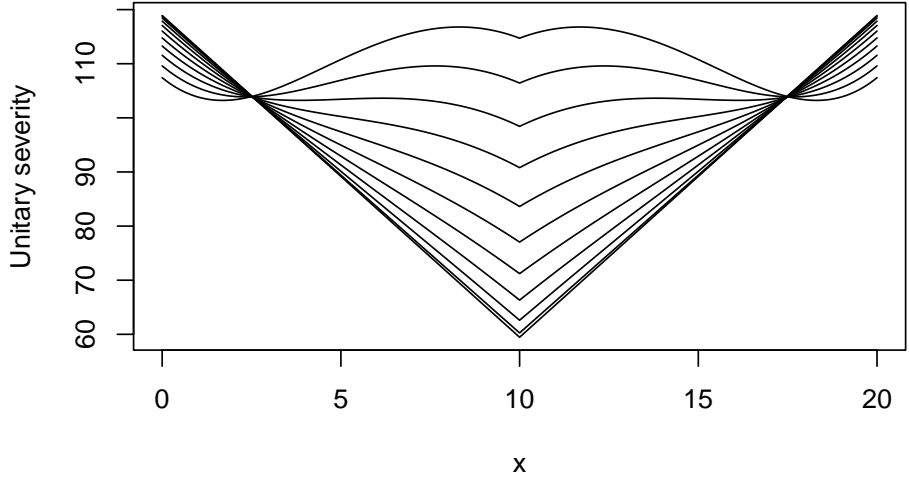}
\caption{Unitary severity $\Sevu(x,y,0)$ as a function of $x \in [0,L]$. Each curve corresponds to a fixed value of $y \in [0,h/2]$. The smaller the value at $x=L/2$, the smaller $y$.} \label{fig:sev-u-x}
\end{figure}

\subsection{Reference parameters}\label{ss:poutre-ref}

The measure $\lambda$ is taken to be the volume measure. The parameters of the material are $\alpha=3$ and $m=1.5$. We assume that experimental data are available for a test specimen with volume $\lambda_\mathrm{ref} = 3 \cdot 10^{-5}~\mathrm{m}^3$, for which the detail category associated with $p = 0.05$ and $\mathsf{N}_p = 2 \cdot 10^6$ cycles is $\mathsf{S}_p = 200~\mathrm{MPa}$. This set of parameters is the same as for the S-N curve shown on Figure~\ref{fig:SN-simu} (see Section~\ref{ss:weibull-basquin}).

\subsection{Monte Carlo experiments}\label{ss:poutre-exp}

We first compute the survival probability of the beam~\eqref{eq:survival-example} as a function of the number of cycles when the sequence $(P_i)_{i \geq 1}$ is deterministic and constant. For several values of $P$, the resulting survival curves are plotted on Figure~\ref{fig:MC-mean}. As expected, the heavier the vehicle, the faster the failure.

\begin{figure}[htpb]
    \centering
    \includegraphics[width=.8\textwidth]{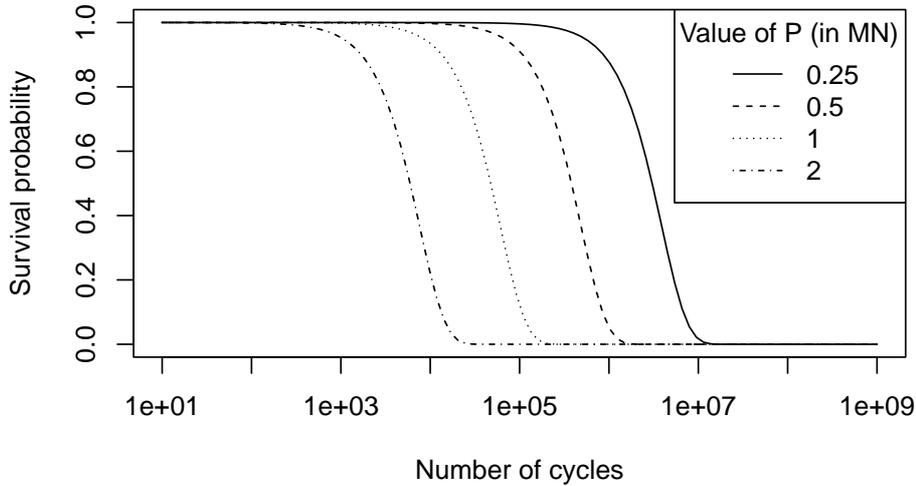}
    \caption{Survival probability of the I-beam, for sequences $(P_i)_{i \geq 1}$ which take the constant value $P$.}
    \label{fig:MC-mean}
\end{figure}

We then investigate the effect of a possible randomness of $(P_i)_{i \geq 1}$, and more precisely the effect of an increasing variance while keeping the expectation fixed. To this aim, we assume that the elements of this sequence are independent realisations of a random variable $P$ such that $P^\alpha$ is Gamma distributed, with rate $\theta>0$ and shape parameter $a>0$. This choice has two motivations: unlike a Gaussian model, it ensures positivity, and from a computational point of view, it allows to use a single random draw for sums of the form $\dis \sum_{i=n_1+1}^{n_2} P_i^\alpha$, which remain Gamma distributed with rate $\theta$ and shape parameter $(n_2-n_1)a$. 

Under this assumption, we have
\begin{equation*}
    \Exp^*[P] = \theta^{-1/\alpha} \, \frac{\Gamma(a+1/\alpha)}{\Gamma(a)}, \qquad \frac{\sqrt{\Var^*(P)}}{\Exp^*[P]} = \sqrt{\frac{\Gamma(a+2/\alpha) \, \Gamma(a)}{\Gamma(a+1/\alpha)^2}-1}.
\end{equation*}
We consider several choices of the parameters $(\theta,a)$ which ensure that $\Exp^*[P]$ remains equal to $P_\mathrm{mean} = 0.25~\mathrm{MN}$, while the standard deviation of $P$ is $c \, P_\mathrm{mean}$ for $c \in \{0, 0.2, 0.5, 1\}$. The corresponding densities for $P$ are plotted on Figure~\ref{fig:MC-var}~(a). We plot on Figure~\ref{fig:MC-var}~(b) the survival probability of the structure for these various load parameters. It is observed that a larger variance significantly reduces the life time of the structure. This illustrates the important influence of tail events, namely large values of $P$, on the survival probability.

\begin{figure}[htpb]
    \centering
    \includegraphics[width=.8\textwidth]{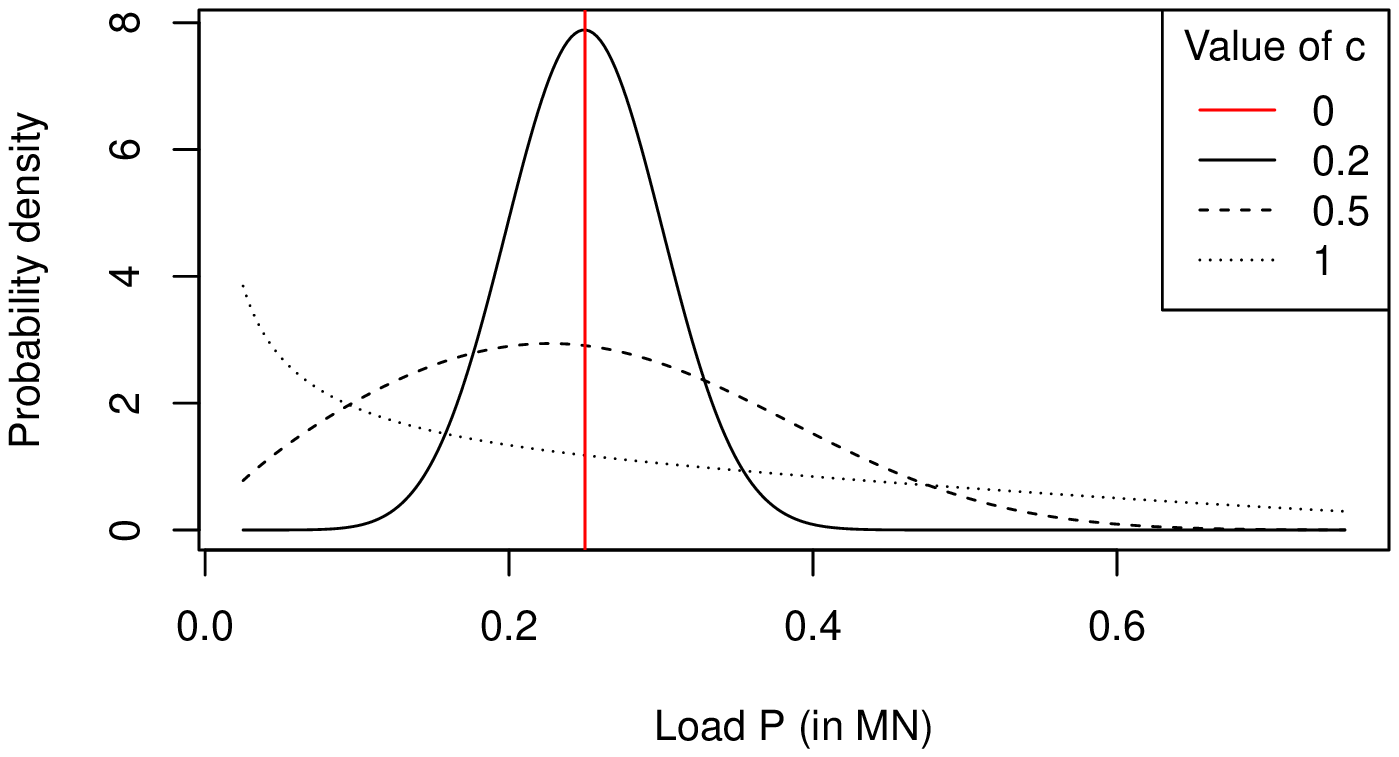}
    \includegraphics[width=.8\textwidth]{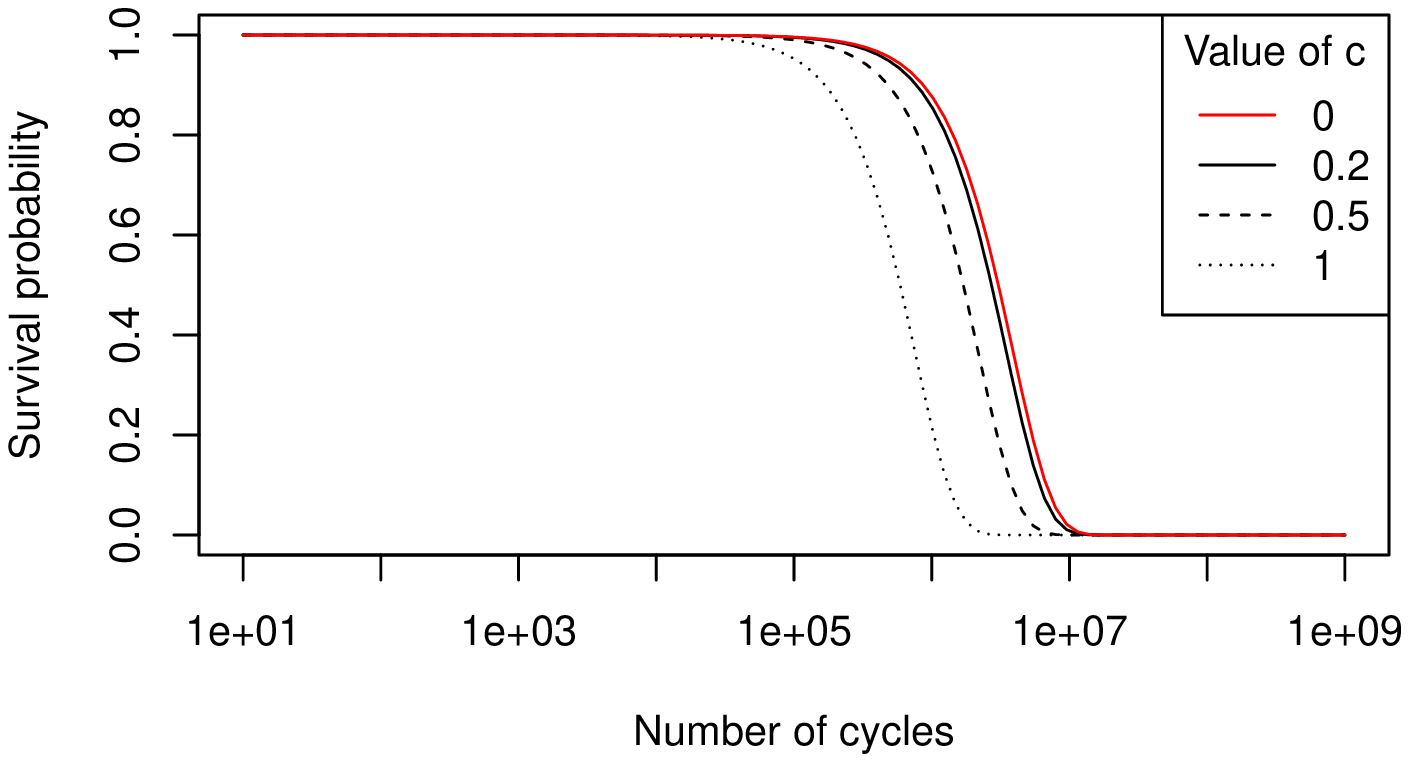}
    \caption{(a): probability density of the random loads $P$, which are adjusted such that they all share the same expectation $P_\mathrm{mean}$ and such that their variation coefficient is $c$. (b) survival probability of the I-beam, for sequences $(P_i)_{i \geq 1}$ distributed according to the density shown in (a) for a given value of $c$.}
    \label{fig:MC-var}
\end{figure}

The resemblance between the survival curves of Figure~\ref{fig:MC-mean} (deterministic and constant load with varying value) and Figure~\ref{fig:MC-var}~(b) (random load with fixed expectation and varying variance) suggests to define, for a random load with expectation $P_\mathrm{mean}$ and coefficient of variation $c$, a notion of \emph{deterministic equivalent load} $P_\mathrm{eq}(c)$, which is (if it exists) the deterministic and constant load for which the associated survival curve fits the survival curve of the random load. Since these curves need not overlap exactly, we define the deterministic equivalent load $P_{\mathrm{eq},p}(c)$ as a function of the reference probability $p \in (0,1)$ as the deterministic and constant load for which the quantile of order $p$ of the structure's NCF coincides with the quantile of order $p$ of the structure's NCF under random load. 

In other words, using the formula~\eqref{eq:survival-example}, for any $p \in (0,1)$:
\begin{itemize}
    \item we define the quantile of order $p$ of the NCF under deterministic and constant load $P$ by
    \begin{equation*}
        N_{\mathrm{det},p}(P) = \inf \left\{n \geq 0: \, \exp\big(-\mQ \left(n P^\alpha\right)^m\big) \leq 1-p\right\};
    \end{equation*}
    \item we define the quantile of order $p$ of the NCF under random load with mean $P_\mathrm{mean}$ and coefficient of variation $c$ by
    \begin{equation*}
        N_{\mathrm{sto},p}(P_\mathrm{mean},c) = \inf\left\{n \geq 0: \, \Exp^*\left[\exp\left(-\mQ \left(\sum_{i=1}^n P_i^\alpha\right)^m\right)\right] \leq 1-p\right\}.
    \end{equation*}
\end{itemize} 
The deterministic equivalent load $P_{\mathrm{eq},p}(c)$ is then defined by the equation
\begin{equation} \label{eq:def_P_equiv}
    N_{\mathrm{det},p}(P_{\mathrm{eq},p}(c)) = N_{\mathrm{sto},p}(P_\mathrm{mean},c).
\end{equation}
For a fixed value of $P_\mathrm{mean} = 0.25~\mathrm{MN}$, the value of the ratio $P_{\mathrm{eq},p}(c)/P_\mathrm{mean}$, as a function of the coefficient of variation $c$, and for the two reference probabilities $p=0.05$ and $p=0.5$, is plotted on Figure~\ref{fig:MC-eq}. For the values of interest of $m$, $\mQ$ and $P_\mathrm{mean}$, we observe that the curves corresponding to the two reference probabilities $p=0.05$ and $p=0.5$ are almost identical, but this need not be a general fact. We note that $P_{\mathrm{eq},p}(c)/P_\mathrm{mean}$ is always larger than 1 and increases when $c$ increases, which underlines again the fact that large load values (which may occur in the random context) have a critical effect on the survival probability.

\begin{figure}[htpb]
    \centering
    \includegraphics[width=.8\textwidth]{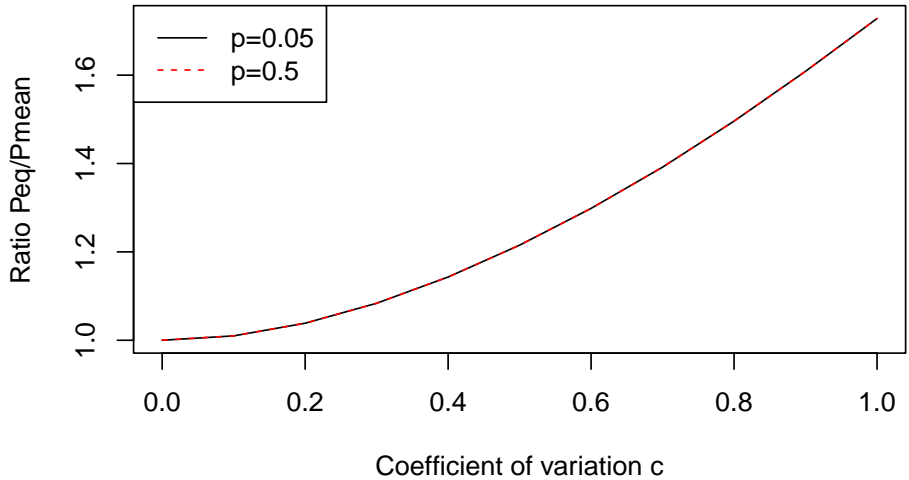}
    \caption{Value of the ratio $P_{\mathrm{eq},p}(c)/P_\mathrm{mean}$ as a function of $c$, where $P_{\mathrm{eq},p}(c)$, defined by~\eqref{eq:def_P_equiv}, is the deterministic equivalent load (with respect to the reference probability $p$) corresponding to a random load with mean $P_\mathrm{mean}$ and coefficient of variation $c$.} 
    \label{fig:MC-eq}
\end{figure}

\subsection{Density of the failure point}\label{ss:poutre-fail}

We conclude this section by investigating the density of failure point, which is given by~\eqref{eq:failure_density}. It may be checked that $\Sevu(x,y,z)$ is invariant with respect to any of the transforms $x \mapsto L-x$, $y \mapsto -y$ and $z \mapsto -z$. Besides, for any $(x,y)$, the function $z \mapsto \Sevu(x,y,z)$ reaches its maximum at $z=0$. The points with maximal failure density, as introduced in Section~\ref{sss:fail}, are therefore located in the $z=0$ plane and symmetrically distributed with respect to the centre of the beam. On Figure~\ref{fig:p-fail}, we represent the failure density $p_\mathrm{fail}(x,y,0)$ in the quarter plane $0 \leq x \leq L/2$ and $0 \leq y \leq h/2$.

\begin{figure}[htpb]
    \centering
    \includegraphics[width=\textwidth]{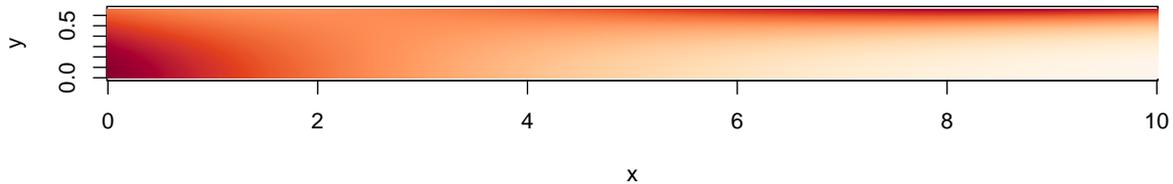}
    \caption{The failure density $p_\mathrm{fail}(x,y,z)$ for $0 \leq x \leq L/2$, $0 \leq y \leq h/2$ and $z=0$. The darker the colour, the higher the value of the density. Here, the most probable points of failure are on the left-hand side of the beam ($x \approx 0$) and on the right top ($8 \leq x \leq 10$, $y \approx h/2$).}
    \label{fig:p-fail}
\end{figure}

\appendix

\section{Laplace approximation of the integral term in the Weibull--Basquin model}\label{app:laplace}

The formula~\eqref{eq:indy3} for the survival probability of a structure in the Weibull--Basquin model derived in Section~\ref{ss:WB-struct} requires to evaluate the integral
\begin{equation*}
    \mI = \int_{x \in E} \Sevu(x)^{\alpha m} \, \lambda(\dd x)
\end{equation*}
in the computation of the constant $\mQ$ defined by~\eqref{eq:indy4}. This may be computationally costly if a finite element method with a small mesh size is employed. On the other hand, as soon as $\alpha m$ is large, one may expect only the largest values of $\Sevu(x)^{\alpha m}$ to actually contribute to this integral. Thus, the mere knowledge of $\Sevu(x)$ in the neighbourhood of its local maxima --- which we shall call \emph{hot points} ---, and no longer in the whole structure $E$, might be sufficient to provide a satisfactory approximation of $\mI$. This idea lies at the basis of the \emph{Laplace method}, which provides an asymptotic approximation of the form
\begin{equation*}
    \mI \approx \sum_{j \geq 1} \Sevu(x^*_j)^{\alpha m} \, V^*_j(\alpha m), 
\end{equation*}
in the regime $\alpha m \gg 1$. In this expression, the quantities $\{V^*_j(\alpha m), \, j \geq 1\}$ are expressed in volume units and are obtained by performing a Taylor expansion, up to the first nonvanishing derivatives, of 
\begin{equation*}
    \varphi(x) = \ln \Sevu(x)
\end{equation*}
in the neighbourhood of the hot points $\{x^*_j, \, j \geq 1\}$.

We illustrate this approach on the example of the I-beam presented in Section~\ref{s:poutre}. A first remark in this context is that the unitary severity field\footnote{We now denote points of $E$ by the triple $(x,y,z)$ rather than by the symbol $x$.} $\Sevu(x,y,z)$ is invariant with respect to each of the transforms $x \mapsto L-x$, $y \mapsto -y$ and $z \mapsto -z$, so that the computation may be restricted to the set
\begin{equation*}
    E' = \left\{(x,y,z): \, 0 \leq x \leq L/2, \, 0 \leq y \leq h/2, \, 0 \leq z \leq I(y) \right\}, \quad I(y) = \begin{cases}
        f/2 & \text{if $y \leq h/2-e$,}\\
        b/2 & \text{if $h/2-e < y \leq h/2$,}
    \end{cases}
\end{equation*}
and we have
\begin{equation*}
    \mI = 8 \, \mI', \qquad \mI' = \int_{(x,y,z) \in E'} \exp\left(k \, \varphi(x,y,z)\right) \, \dd x \, \dd y \, \dd z,
\end{equation*}
where we have set $k = \alpha m$ and $\varphi(x,y,z) = \ln \Sevu(x,y,z)$.

It may be checked that the unitary severity field $\Sevu(x,y,z)$ admits two local maxima in $E'$, namely
\begin{equation*}
    (x^*_1,y^*_1,z^*_1) = (0,0,0), \qquad (x^*_2,y^*_2,z^*_2) = (x^*_2,h/2,0),
\end{equation*}
for some $0 < x^*_2 < L/2$. These two hot points are represented in the $z=0$ plane on Figure~\ref{fig:laplace}.

At the first hot point $(x^*_1,y^*_1,z^*_1) = (0,0,0)$, one has
\begin{align*}
    &\partial_x \varphi(x^*_1,y^*_1,z^*_1) < 0,\\
    &\partial_y \varphi(x^*_1,y^*_1,z^*_1) = 0, \quad \partial_{yy} \varphi(x^*_1,y^*_1,z^*_1) < 0,\\
    &\partial_z \varphi(x^*_1,y^*_1,z^*_1) < 0.
\end{align*}
We therefore define
\begin{equation*}
    \mI'_1 = \int_{(x,y,z) \in E'} \ee^{k(\varphi(x^*_1,y^*_1,z^*_1) + \partial_x \varphi(x^*_1,y^*_1,z^*_1)(x-x^*_1) + \frac{1}{2} \partial_{yy} \varphi(x^*_1,y^*_1,z^*_1)(y-y^*_1)^2 + \partial_z \varphi(x^*_1,y^*_1,z^*_1)(z-z^*_1))} \, \dd x\dd y\dd z.
\end{equation*}
Introducing the notation
\begin{equation*}
    \Delta x^*_1 = \frac{1}{-\partial_x \varphi(x^*_1,y^*_1,z^*_1)}, \qquad \Delta y^*_1 = \frac{1}{\sqrt{-\partial_{yy} \varphi(x^*_1,y^*_1,z^*_1)}}, \qquad \Delta z^*_1 = \frac{1}{-\partial_z \varphi(x^*_1,y^*_1,z^*_1)},
\end{equation*}
one may compute
\begin{equation*}
    \mI'_1 = \Sevu(x^*_1,y^*_1,z^*_1)^k \, \left(V^*_{1,\mathrm{web}}(k) + V^*_{1,\mathrm{flange}}(k)\right),
\end{equation*}
with
\begin{align*}
    V^*_{1,\mathrm{web}}(k) &= \frac{\Delta x^*_1}{k} \,  \Phi_{(1)}\left(\frac{L/2}{\Delta x^*_1/k}\right) \times \frac{\Delta y^*_1}{\sqrt{k}} \,  \Phi_{(2)}\left(\frac{h/2-e}{\Delta y^*_1/\sqrt{k}}\right) \times \frac{\Delta z^*_1}{k} \, \Phi_{(1)}\left(\frac{f/2}{\Delta z^*_1/k}\right),
    \\
    V^*_{1,\mathrm{flange}}(k) &= \frac{\Delta x^*_1}{k} \, \Phi_{(1)}\left(\frac{L/2}{\Delta x^*_1/k}\right) \\
    &\qquad \qquad \times \frac{\Delta y^*_1}{\sqrt{k}} \, \left[\Phi_{(2)}\left(\frac{h/2}{\Delta y^*_1/\sqrt{k}}\right) - \Phi_{(2)}\left(\frac{h/2-e}{\Delta y^*_1/\sqrt{k}}\right) \right] \times \frac{\Delta z^*_1}{k} \, \Phi_{(1)}\left(\frac{b/2}{\Delta z^*_1/k}\right),
\end{align*}
where the functions $\Phi_{(1)}$ and $\Phi_{(2)}$ are defined by
\begin{equation*}
    \Phi_{(1)}(u) = \int_{v=0}^u \ee^{-v} \, \dd v, \qquad \Phi_{(2)}(u) = \int_{v=0}^u \ee^{-v^2/2} \, \dd v.
\end{equation*}

For the second hot point $(x^*_2,y^*_2,z^*_2) = (x^*_2,h/2,0)$, we have
\begin{align*}
    &\partial_x \varphi(x^*_2,y^*_2,z^*_2) = 0, \quad \partial_{xx} \varphi(x^*_2,y^*_2,z^*_2) < 0,\\
    &\partial_y \varphi(x^*_2,y^*_2,z^*_2) > 0,\\
    &\partial_z \varphi(x^*_2,y^*_2,z^*_2) < 0,
\end{align*}
which leads us to define
\begin{equation*}
    \mI'_2 = \int_{(x,y,z) \in E'} \ee^{k(\varphi(x^*_2,y^*_2,z^*_2) + \frac{1}{2}\partial_{xx} \varphi(x^*_2,y^*_2,z^*_2)(x-x^*_2)^2 - \partial_y \varphi(x^*_2,y^*_2,z^*_2)(y^*_2-y) + \partial_z \varphi(x^*_2,y^*_2,z^*_2)(z-z^*_2))} \, \dd x\dd y\dd z.
\end{equation*}
We observe that
\begin{equation*}
    \mI'_2 = \Sevu(x^*_2,y^*_2,z^*_2)^k \, \left(V^*_{2,\mathrm{web}}(k) + V^*_{2,\mathrm{flange}}(k)\right)
\end{equation*}
with
\begin{align*}
    V^*_{2,\mathrm{web}}(k) &= \frac{\Delta x^*_2}{\sqrt{k}} \, \left[ \Phi_{(2)}\left(\frac{x^*_2}{\Delta x^*_2/\sqrt{k}}\right) + \Phi_{(2)}\left(\frac{L/2-x^*_2}{\Delta x^*_2/\sqrt{k}}\right) \right]
    \\
    &\qquad \qquad \times \frac{\Delta y^*_2}{k} \, \left[ \Phi_{(1)}\left(\frac{h/2}{\Delta y^*_2/k}\right) - \Phi_{(1)}\left(\frac{e}{\Delta y^*_2/k}\right) \right] \times \frac{\Delta z^*_2}{k} \, \Phi_{(1)}\left(\frac{f/2}{\Delta z^*_2/k}\right),
    \\
    V^*_{2,\mathrm{flange}}(k) &= \frac{\Delta x^*_2}{\sqrt{k}} \, \left[ \Phi_{(2)}\left(\frac{x^*_2}{\Delta x^*_2/\sqrt{k}}\right) + \Phi_{(2)}\left(\frac{L/2-x^*_2}{\Delta x^*_2/\sqrt{k}}\right) \right]
    \\
    & \qquad \qquad \times \frac{\Delta y^*_2}{k} \, \Phi_{(1)}\left(\frac{e}{\Delta y^*_2/k}\right) \times \frac{\Delta z^*_2}{k} \, \Phi_{(1)}\left(\frac{b/2}{\Delta z^*_2/k}\right),
\end{align*}
and
\begin{equation*}
    \Delta x^*_2 = \frac{1}{\sqrt{-\partial_{xx} \varphi(x^*_2,y^*_2,z^*_2)}}, \qquad \Delta y^*_2 = \frac{1}{\partial_y \varphi(x^*_2,y^*_2,z^*_2)}, \qquad \Delta z^*_2 = \frac{1}{-\partial_z \varphi(x^*_2,y^*_2,z^*_2)}.
\end{equation*}

\medskip

The expressions defining $V^*_{1,\mathrm{web}}$ and $V^*_{1,\mathrm{flange}}$ (respectively $V^*_{2,\mathrm{web}}$ and $V^*_{2,\mathrm{flange}}$) involve the \emph{characteristic lengths} $\Delta x^*_1/k$, $\Delta y^*_1/\sqrt{k}$ and $\Delta z^*_1/k$ (respectively $\Delta x^*_2/\sqrt{k}$, $\Delta y^*_2/k$ and $\Delta z^*_2/k$), which indicate at which scale the value of the unitary severity can be considered to be close to its maximum $\Sevu(x^*_1,y^*_1,z^*_1)$ (respectively $\Sevu(x^*_2,y^*_2,z^*_2)$). These characteristic lengths are represented on Figure~\ref{fig:laplace} (in the $x$ and $y$ directions) for various values of $k$.

\begin{figure}[htpb]
    \includegraphics[width=\textwidth]{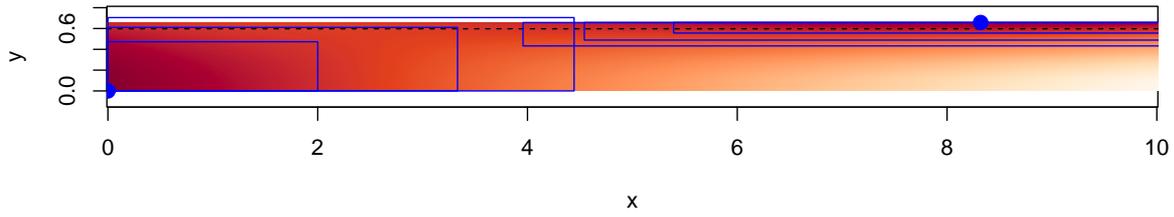}
    \caption{The background picture represents the unitary severity field $\Sevu(x,y,0)$ for $x \in [0,L/2]$ and $y \in [0,h/2]$. The darker the colour, the larger the severity. A dashed line is added at $y=h/2-e$ to show the separation between the web and the flange. The two hot points $(x^*_1=0,y^*_1=0,z^*_1=0)$ and $(x^*_2,y^*_2=h/2,z^*_2=0)$ are represented with blue points. For $k \in \{4.5, 6, 10\}$, the rectangles $[x^*_1, x^*_1 + \Delta x^*_1/k] \times [y^*_1, y^*_1 + \Delta y^*_1/\sqrt{k}]$ and $[x^*_2 - \Delta x^*_2/\sqrt{k}, x^*_2 + \Delta x^*_2/\sqrt{k}] \times [y^*_2-\Delta y^*_2/k, y^*_2]$ are added in blue. 
    }
    \label{fig:laplace}
\end{figure}

The Laplace approach eventually consists in approximating $\mI'$ by the sum $\mI'_1+\mI'_2$. For various values of $k$, the ratio between the approximation $\mI'_1+\mI'_2$ and the reference value $\mI'$\footnote{The reference value for $\mI'$ is computed using a quadrature rule on a fine spatial discretisation of the unitary severity field (in practice, we consider a grid with mesh size $2~\mathrm{cm}$ in $x$, $5~\mathrm{mm}$ in $y$, and $2~\mathrm{mm}$ in $z$ in the web, $1~\mathrm{cm}$ in $z$ in the flange).} is given in bold in the first column of Table~\ref{tab:laplace}. The next columns respectively display the ratio $\mI'_1/(\mI'_1+\mI'_2)$, $V^*_{1,\mathrm{web}}/(V^*_{1,\mathrm{web}}+V^*_{1,\mathrm{flange}})$ and $V^*_{2,\mathrm{web}}/(V^*_{2,\mathrm{web}}+V^*_{2,\mathrm{flange}})$, in order to quantify the dominant contributions in $\mI'_1+\mI'_2$.

\begin{table}[htpb]
    \centering
    \begin{tabular}{ccccc}
        $k$ & $\displaystyle \frac{\mI'_1+\mI'_2}{\mI'}$ & $\displaystyle \frac{\mI'_1}{\mI'_1+\mI'_2}$ & $\displaystyle \frac{V^*_{1,\mathrm{web}}}{V^*_{1,\mathrm{web}}+V^*_{1,\mathrm{flange}}}$ & $\displaystyle \frac{V^*_{2,\mathrm{web}}}{V^*_{2,\mathrm{web}}+V^*_{2,\mathrm{flange}}}$\\[4mm]
        \hline
        $4.5$ & $\mathbf{1.21}$ & $0.36$ & $0.20$ & $ 0.05$\\
        $6$ & $\mathbf{1.11}$ & $0.32$ & $0.22$ & $0.04$\\
        $10$ & $\mathbf{0.96}$ & $0.24$ & $0.27$ & $0.02$
    \end{tabular}
    \caption{Numerical results for the Laplace approximation of $\mI'$ by $\mI'_1+\mI'_2$ in the example of the I-beam from Section~\ref{s:poutre}.}
    \label{tab:laplace}
\end{table}

As a conclusion, for the value $k = \alpha m = 4.5$ of the example of Section~\ref{s:poutre}, the integral $\mI$ can be estimated by the Laplace method with a relative accuracy of $21\%$. This method only requires to compute the unitary severity field and its first nonvanishing derivatives at the hot points, which may represent a substantial computational gain with respect to a complete discretisation method, which would require to compute the unitary severity field in the whole structure. Furthermore, and as expected, the accuracy of the method improves (i.e. the relative error decreases) when $\alpha m$ assumes larger values (the error being already of only $11\%$ for $\alpha m = 6$).

\section*{Acknowledgements}

The work presented in this article elaborates on a preliminary work that explored some of the issues considered here, and which was performed in the context of the internship of Bruno Martins Aboud at École des Ponts ParisTech. We would also like to thank V\'eronique Le~Corvec and Jorge Semiao at OSMOS Group for stimulating and enlightening discussions about the work reported in this article. This research received the support of OSMOS Group, as part of its effort to develop new solutions for the Structural Health Monitoring of civil and industrial assets.

\bibliographystyle{plain}
\bibliography{biblio-miner}

\end{document}